\documentclass{article}
\usepackage{amsmath}
\usepackage{amssymb}
\usepackage{float}
\usepackage{graphicx}
\usepackage{subfig} 
\usepackage{epsfig}
\usepackage{caption}
\usepackage{subcaption}
\usepackage{etoolbox}
\usepackage{amsthm}
\usepackage{ytableau}
\usepackage{tikz}
\usepackage{makecell}
\usepackage[utf8]{inputenc}
\usepackage{multirow}
\usepackage{xfrac}
\usepackage{upgreek}
\usepackage{dsfont} 
\usepackage{empheq}

\usepackage[colorlinks,linkcolor=blue,citecolor=blue]{hyperref}
\usepackage{aliascnt}
\newtheorem{theorem}{Theorem}[section]
\newaliascnt{lemma}{theorem}
\newtheorem{lemma}[lemma]{Lemma}
\aliascntresetthe{lemma}

\newaliascnt{corollary}{theorem}
\newtheorem{corollary}[corollary]{Corollary}
\aliascntresetthe{corollary}

\theoremstyle{definition}

\newaliascnt{definition}{theorem}
\newtheorem{definition}[definition]{Definition}
\aliascntresetthe{definition}

\newaliascnt{example}{theorem}
\newtheorem{example}[example]{Example}
\aliascntresetthe{example}

\newaliascnt{remark}{theorem}
\newtheorem{remark}[remark]{Remark}
\aliascntresetthe{remark}

\newaliascnt{assumption}{theorem}

\aliascntresetthe{assumption}

\newaliascnt{proposition}{theorem}
\newtheorem{proposition}[proposition]{Proposition}
\aliascntresetthe{proposition}

\usepackage[a4paper,margin=2.5cm,footskip=0.25in]{geometry}

\DeclareMathOperator{\sgn}{sgn}
\DeclareMathOperator{\tr}{tr}

\counterwithin*{equation}{section}
\counterwithin*{equation}{section}

\newcommand{\beq}{\begin{equation}}  
\newcommand{\eeq}{\end{equation}}  
\newcommand\dd{\mathrm{d}} 
\newcommand\bm{\mathbf{m}}
\newcommand\bv{\mathbf{v}}
\newcommand\bV{\mathbf{V}}
\newcommand\bF{\mathbf{F}}
\newcommand\bG{\mathbf{G}}
\newcommand\bL{\mathbf{L}}
\newcommand\bM{\mathbf{M}}
\newcommand\Cl{\mathcal{C}\ell}

\newcommand{\F}{{\mathbb F}}

\newcommand{\C}{{\mathbb C}}
\newcommand{\R}{{\mathbb R}}

\newcommand{\rd}{\mathrm{d}}

\newcommand{\ri}{\mathrm{i}}

\newcommand\la{{\lambda}}

\newcommand\al{{\alpha}}

\newcommand\ze{{\zeta}}

\newcommand\Jac{\mathrm{Jac}\,}

\newcommand{\si}{\sigma}
\newcommand{\todo}[1]{\textbf{\textup{\color{purple}(*** TODO: {#1} ***)}}}
 \newcommand{\js}[1]{{\color{red}\begin{quote}\textbf{JS:#1}\end{quote}}}
 \newcommand{\avg}[1]{\bigl\langle #1 \bigr\rangle}
\newcommand{\abs}[1]{\left\lvert #1 \right\rvert}
\title{Vector peakon equations and isospectral flows in Clifford algebras}

\author{ A.N.W. Hone\footnote{Corresponding author e-mail: A.N.W.Hone@kent.ac.uk}~\\
School of Engineering, Mathematics \& Physics~\\
    University of Kent, Canterbury CT2 7NF, UK\\ \\
    V.S. Novikov\\ Department of Mathematical Sciences\\ Loughborough University, LE11 3TU, UK \\ \\
    J. Szmigielski\thanks{e-mail: szmigiel@math.usask.ca}~\\ 
    Department of Mathematics \& Statistics~\\
    and Centre for Quantum Topology and Its Applications (quanTA)~\\ University of Saskatchewan, Saskatoon S7N 5E6, SK, CANADA }

\begin{document}

\maketitle

\begin{abstract} 
Starting from a spectral problem posed in a Clifford algebra with $d$ generators and Euclidean signature, we study an integrable, coupled  system of PDEs that can be viewed as a vector perturbation of the Camassa--Holm equation with residual orthogonal symmetry.    
In the two-component case $d=2$, we show that the travelling wave solutions correspond to a Liouville integrable Hamiltonian system with two degrees of freedom, making use of a reciprocal transformation linking the coupled PDEs to a symmetry of the Hirota--Satsuma system. We also present a symmetry  classification of all integrable two-component  perturbations of Camassa--Holm, and find that besides the $d=2$ system analyzed here, the coupled 2CH 
system studied by Olver and Rosenau   (as well as by Chen, Liu and Zhang, and Falqui), and equations related to either of those systems by Miura transformations, we also obtain a new system that (to the best of our knowledge) has not been reported previously.  
For the case of an arbitrary number of components $d$, we additionally investigate the short-pulse (high-frequency) regime, in which the limiting dynamics are governed by a vector-valued Hunter–Saxton-type system. Furthermore, we provide a detailed analysis of the corresponding measure-valued (weak) solutions associated with this system.
\end{abstract} 

\section{Introduction} \label{sec:Intro}
\setcounter{equation}{0}
In this paper we study the system of partial differential equations
\beq\label{vecsys}
\bm_t = 2u_x \, \bm + u\, \bm_x + \bV\, \bm, 
\eeq 
where $\bm=\bm(x,t)$ is a $d$-component vector and $\bV=(v_{\mu\nu})$ is a sparse skew-symmetric $d\times d$ matrix of the form
\beq\label{Vdef}
\bV = \left( \begin{array}{cc}
0 & \bv^T \\
-\bv & \mathbf{0}
\end{array}
\right),
\eeq
with $\bv=(v_{1\nu})_{\nu=2,\ldots,d}$. The scalar field $u$ and the components of $\bv$ are related to the components of $\bm=(m_\mu)_{\mu=1,\ldots,d}$ through the constraints
\beq\label{constraints}
u_{xx}-4\al u = 2m_1, \qquad \mathrm{and} \qquad 
v_{1\mu, x}=-2 m_{\mu}\quad \mathrm{for}\,\, \mu=2,\ldots, d, 
\eeq 
where $\al$ is an arbitrary constant.  Writing \eqref{vecsys} in components yields
\beq\label{eqcpts}
\begin{array}{rcl}
m_{1,t} & = & 2 u_x m_1 +um_{1,x}+ \sum_{\nu=2}^n v_{1\nu}m_\nu,  \\
m_{\mu,t} & = &  2u_x m_\mu +um_{\mu,x} -  v_{1\mu}m_1, \qquad \qquad 
\mu=2,\ldots,d. 
\end{array}
\eeq 

The system \eqref{vecsys} is a $d$-dimensional Euclidean analogue of the two-component system introduced in \cite{BS2022}, which arose from the Euler--Bernoulli beam equation and was subsequently extended to a Clifford algebra framework in \cite{BS2025}. A preliminary analysis of two-component Camassa--Holm dynamics with Minkowski signature can also be found in \cite{BS2025}.  The present paper focuses on the corresponding setting with Euclidean signature.  In addition, we place \eqref{vecsys} within the broader framework of the symmetry approach to integrability: in particular, for $d=2$ we obtain a complete classification of a well-defined class of integrable perturbations of the scalar Camassa--Holm equation within this vector/Clifford setting.

In the case when $\bv=0$, when $m_\mu=0$ for $2\leq \mu\leq d$, the system \eqref{vecsys} reduces to the standard  (scalar) Camassa--Holm equation, which was derived as an asymptotic model of shallow water waves in \cite{CH1993}. As well as being completely integrable, the Camassa--Holm equation was found to combine many remarkable features, including smooth solitons, wave breaking, and an interpretation as an Euler-Poincar\'e flow on a diffeomorphism group, as well as a new class of weak soliton solutions known as peakons. The surprising combination of all these properties has led to many attempts to generalize from the scalar case to multi-component analogues of the Camassa--Holm equation.
A complex NLS-type analogue was presented by Fokas in \cite{fokas}, based on the methodology of 
bi-Hamiltonian structures and recursion operators developed previously with Fuchssteiner, 
while a few different 
integrable two-component examples were 
introduced via a tri-Hamiltonian approach by Olver and Rosenau \cite{or}. One of the latter examples, commonly referred to as 2CH, was considered in more detail by Chen, Liu and Zhang \cite{CLZ, LZ}, and also Falqui \cite{falqui}, who used a bi-Hamiltonian structure of hydrodynamic type as the basis for their analysis, generalizing the known structure of the original scalar CH equation. Another approach based on supersymmetry, due to Popowicz \cite{pop}, produced non-integrable equations with only one Hamiltonian structure.

Since that time, multi-component analogues of Camassa--Holm have been constructed from  multiple different viewpoints, usually by starting from an appropriate family of Lax operators \cite{HI, lifeng,  XQ2016, XZQ2015},  from associated geometric flows \cite{qsy}, or as particular cases of a larger set of equations \cite{bkm}. For a restricted class of two-component systems, there is also a classification of integrable coupled equations of 
Camassa--Holm type \cite{HNW}. However, the literature on such equations is now unfeasibly large, so the small selection of papers cited here is necessarily incomplete. 

Given the plethora of different multi-component equations of Camassa--Holm type, it is worth pointing out some of the distinctive features of the system \eqref{vecsys}.  This system 
equips the scalar Camassa--Holm with internal degrees of freedom, represented by the skew-symmetric tensor ${\bf V}$, similar to an orbital momentum or spin.
The global symmetry of the system is $O(d-1)$, under which the tensor ${\bf V}$ transforms into another tensor of the same type. 
The Clifford algebra structure creates complex interactions between internal degrees of freedom. In the simplest case of peakon solutions, this Clifford structure influences the long-time behaviour, leading to coordinated energy exchange even when the peakons are spatially separated.

To keep the presentation self-contained, we first review (and then extend) the setup of \cite{BS2025}.  As part of this review we explain in what sense the vector PDE \eqref{vecsys} (equivalently, the component system \eqref{eqcpts}) is integrable, by connecting it to an isospectral deformation of a linear problem formulated in the Clifford algebra $\mathcal{C}\ell(W)$, where $W$ is a $d$-dimensional vector space equipped with a nondegenerate symmetric bilinear form.  We show that for $d=2$ the system \eqref{vecsys} admits an infinite hierarchy of commuting symmetries.  For arbitrary $d$, we subsequently specialize to $\al=0$, corresponding to the short-pulse (Hunter--Saxton) limit \cite{HS, HZ}, and construct weak solutions of \eqref{vecsys} arising from measure-valued elements of $\mathcal{C}\ell(W)$.

For comparison, the beam system considered in \cite{BS2022} is
\begin{align} \label{bs} 
m_t&=(um)_x+u_xm-vm, \\
n_t&=(un)_x+u_xn+vn, \label{bsan}
\end{align}
with constraints
\begin{equation} \label{eq:bs-constraints} 
u_{xxx}-4 u_x=(m+n)_x, \quad v_x=(n-m), 
\end{equation} 
corresponding to setting $\al=1$ in 
\eqref{constraints}. 
The underlying principle from which equations \eqref{bs}-\eqref{eq:bs-constraints} were derived was that of isospectral deformations of the Euler-Bernoulli beam problem.  
The same system of equations also appears in the list of three interesting integrable equations 
connected to a $4\times 4$ matrix Lax pair in the work of Geng and Wu \cite{GW}.

Setting $m=m_1+m_2$ and $n=m_1-m_2$, the system consisting of \eqref{bs} and \eqref{bsan} becomes
\begin{align*}
m_{1,t}&=(um_1)_x+u_xm_1-vm_2, \\
m_{2,t}&=(um_2)_x+u_xm_2-vm_1, 
\end{align*} 
together with
$$ 
u_{xxx}-4u_x=2m_{1,x}, \qquad v_x=2m_2.
$$ 
This system is associated with a Clifford algebra of Minkowski signature $(+,-)$.  By contrast, the Euclidean signature $(+,+)$ leads (with a minor modification in the constraint, involving the parameter $\alpha$ used throughout this work) to
\begin{align}
\label{2cptsys}
&\left\{ \begin{array}{rl}
m_{1,t}&=(um_1)_x+u_xm_1+vm_2, \\
m_{2,t}&=(um_2)_x+u_xm_2-vm_1, 
\end{array}\right. 
\end{align}
with constraints
\begin{equation}
u_{xxx}-4\alpha u_x=2m_{1,x}, \qquad v_x=-2m_2.
\end{equation}
In the second part of the paper, we concentrate on the Hunter--Saxton limit $\alpha=0$.  Moreover, we assume that the measures $m_1,m_2$ are compactly supported and work with the integrated form of the first constraint, namely $u_{xx}=2m_1$.
For a fixed value of $\alpha\neq 0$, the Hunter--Saxton analogue of the system corresponds to taking 
the short pulse, high frequency limit of (\ref{2cptsys}). Upon rescaling the dependent and independent variables according to 
 $$
 m_1\to \epsilon^{-2} m_1, \quad 
 m_2\to \epsilon^{-2} m_2,  \quad u\to u, \quad v \to \epsilon^{-1} v, \quad 
 x\to \epsilon x, \quad t \to \epsilon t, 
 $$
 and taking the limit $\epsilon\to 0$,
 we find that the form of the system (\ref{2cptsys}) is preserved, and only the relation between $m_1$ and $u$ is altered, so that after integrating the first relation we have: 
 \beq \label{scaled}
 2m_1=u_{xx}, \qquad 2m_2=-v_x.
 \eeq 
 In section 5 below, we will be concerned with constructing weak solutions of the $d$-component version of the short pulse limit. 

\subsection{Highlights of the paper}

The paper covers a wide range of novel aspects of vector CH-type systems. For the convenience of the reader, the main highlights are summarized as follows:-- 

\begin{itemize}
\item \textbf{A Euclidean Clifford-algebraic vector Camassa--Holm system:}
\autoref{sec:Intro}, equation  \eqref{vecsys}, and \eqref{mtcliff} below. 

\item \textbf{Lax representation, Hamiltonian operator, and higher symmetries:}
\autoref{laxsym}, equations \eqref{mtcliff}, \eqref{eq:B1}, and \eqref{eq:B2}.  

\item \textbf{Reciprocal transformation and the Hirota--Satsuma connection in the two-component case:} 
\autoref{sec:RT} and \autoref{sec:Hirota-Satsuma}; \autoref{recipsys} and \autoref{hsmiura}.  


\item \textbf{Travelling waves and Liouville integrability of the reduction:}
\autoref{sec:travelling wave};  \autoref{lem:RTV}, 
\autoref{liouvint}. 

\item \textbf{Classification of integrable two-component CH perturbations:}
%
\autoref{sec:classification}; \autoref{class2CH} and \autoref{rem:simanddif}. 

\item \textbf{Hunter--Saxton limit, measure-valued solutions, and Clifford continued fractions:}
\autoref{sec:NB/HS};  \autoref{thm:SCfrac}.  

\item \textbf{Hunter-Saxton peakon dynamics and internal oscillatory modes: a numerical example for $d=2$ and $N=2$: }
\autoref{sec:HS2/peakons}; \autoref{fig: HS2_figures}.  
\end{itemize}

In addition to these highlights, we would like 
to emphasise two particular contributions  to the theory of CH-type systems. The first is 
the reciprocal transformation and the Hirota--Satsuma connection for $d=2$ in section 3, which give the beam/string spectral problems a recognizable place in the existing integrability landscape.  
The second is the classification of integrable two-component CH perturbations in section 4, including one such system which appears to be new, given by  \eqref{mjc2u} or its counterpart \eqref{mjc2}.


\section{Lax pair and higher symmetries}\label{laxsym}
 \setcounter{equation}{0}
 We will now briefly review the setup of \cite{BS2025}.  
 The system (\ref{vecsys}) arises from a Lax pair defined in a Clifford algebra $\Cl(W)$, where $W$ is a $d$-dimensional vector space over a field $\F$ 
 with a symmetric bilinear form $(\, , \, )$. 
 We usually assume $\F=\C$, but sometimes 
 we work with $\F=\R$, and consider different real forms of 
 the underlying complex Clifford algebra.  
 Upon choosing an orthonormal basis of generators $e_\mu$, $\mu=1,\ldots, d$ 
 such that 
 $(e_\mu,e_\nu)=\varepsilon_\mu\delta_{\mu \nu}$, 
 for signs $\varepsilon_\mu=\pm 1$,  the multiplication in 
 $\Cl(W)$ is then given by 
 $$ 
 e_\mu e_\nu + e_\nu e_\mu = 2\varepsilon_\mu\delta_{\mu\nu} \, 1 , 
 $$
 where $1$ denotes the unit $1\in\F\subset\Cl (W)$, 
 which will usually be omitted from formulae. 
 The signs $\varepsilon_\mu$ determine the signature of the 
 bilinear form $(\, ,\,)$: over $\C$ one can always fix 
 $\varepsilon_\mu=1$ for all $\mu$, which is 
 our canonical choice for most of the paper, but on occasions when 
 real solutions are being considered we 
 can 
 allow  
 different signatures over $\R$. 
  The whole algebra is a direct sum 
 of graded components: 
 $$ 
 \Cl (W) = \Cl_0\oplus \Cl_1\oplus\cdots \oplus \Cl_n,  
 $$
 where $\Cl_0=<1>\cong \F$, $\Cl_1 =<e_\mu>_{\mu=1,\ldots, n}\cong W$, 
 $\Cl_2 =<e_\mu e_\nu>_{\mu<\nu}$, etc. We can also 
 equip $\Cl(W)$ with the trace $\mathrm{tr}$, a linear functional 
 on the Clifford algebra defined by $\mathrm{tr} (1)=1$ 
 and $\mathrm{tr} (c)=0$ for all 
 $c\in\Cl_j$ with $j\geq 1$, which then has the property 
 that 
 $$ 
 \mathrm{tr}(ab)=(a,b) \qquad \forall\,  a,b\in W\cong \Cl_1. 
 $$
 
 To construct a vector perturbation of the Camassa--Holm system, 
 we start from a linear 
 problem based on 
 a string equation with a potential 
 $$M=\sum_{\mu=1}^nm_\mu\, e_\mu\in\Cl_1,$$ that is 
 \beq\label{Phixx}
D_x^2 \Phi = (\al   
+ \la M)\Phi, 
\eeq 
where the wave function $\Phi\in \Cl(V)$, with an arbitrary scalar constant $\al\in\F$ and spectral parameter $\la$, 
and couple this to a time evolution written as 
\beq\label{Phit}
D_t \Phi = \left(\tfrac{1}{2}(a-b_x) + bD_x\right) \Phi.
\eeq
Following \cite{BS2022, BS2025}, we refer to 
the linear equation \eqref{Phixx} as a (generalized) beam problem. 
\begin{lemma}[\cite{BS2025}]
\label{comlem}
The compatibility conditions for the Lax pair given by 
(\ref{Phixx}) and (\ref{Phit}) consist of the following two relations: 
\beq\label{compat}
\begin{array}{rcl} 
a_x + \la [b,M]& = & 0, \\ 
\la M_t  & = & 2\al b_x -\tfrac{1}{2}b_{xxx} + \tfrac{\la}{2} 
\left([a,M] + b\,  \overset{\leftarrow}{\cal L}_M + 
\overset{\rightarrow}{\cal L}_M \, b\right) ; 
\end{array} 
\eeq 
in the above, ${\cal L}_M$
is the operator $MD_x +D_x M$, and the arrow on top denotes whether it acts to the right or the left.
\end{lemma}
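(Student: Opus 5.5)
The plan is a direct operator computation: impose $D_t(D_x^2\Phi)=D_x^2(D_t\Phi)$ for an arbitrary solution $\Phi$ of \eqref{Phixx}, reduce every derivative of order two and higher using \eqref{Phixx}, and then equate coefficients. Throughout, $a,b,M$ are $\Cl(W)$-valued and act on $\Phi$ by left multiplication, so the order of factors must be kept. The scalars $\al,\la$ commute with everything.

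\textbf{Setup.} Write $c=\tfrac12(a-b_x)$, so that $D_t\Phi=c\Phi+b\Phi_x$. Differentiating \eqref{Phixx} once gives
$$\Phi_{xxx}=\la M_x\Phi+(\al+\la M)\Phi_x.$$

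\textbf{The left-hand side.} Apply $D_x^2$ to $c\Phi+b\Phi_x$, replace $\Phi_{xx}$ and $\Phi_{xxx}$ by the expressions above, and collect terms to get
$$D_x^2D_t\Phi=\bigl(c_{xx}+c(\al+\la M)+2b_x(\al+\la M)+\la bM_x\bigr)\Phi+\bigl(2c_x+b_{xx}+b(\al+\la M)\bigr)\Phi_x .$$

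\textbf{The right-hand side.} Using \eqref{Phixx} and then \eqref{Phit},
$$D_t\Phi_{xx}=\la M_t\Phi+(\al+\la M)(c\Phi+b\Phi_x).$$

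\textbf{Equating coefficients.} At any point $x$ the values $\Phi$ and $\Phi_x$ can be prescribed independently, since \eqref{Phixx} is a second-order linear ODE. Hence the two sides agree for every solution exactly when the coefficients of $\Phi$ and $\Phi_x$ agree.

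For $\Phi_x$, note that $2c_x+b_{xx}=a_x$. The condition becomes $a_x+b(\al+\la M)=(\al+\la M)b$. The $\al$ terms cancel because $\al$ is scalar, leaving
$$a_x+\la[b,M]=0,$$
which is the first relation in \eqref{compat}.

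For $\Phi$, the condition is
$$\la M_t=c_{xx}+\la[c,M]+2\al b_x+2\la b_xM+\la bM_x .$$
Now substitute:
$$c_{xx}=\tfrac12 a_{xx}-\tfrac12 b_{xxx},\qquad a_{xx}=-\la\bigl([b_x,M]+[b,M_x]\bigr)\ \text{(from the first relation)},\qquad [c,M]=\tfrac12[a,M]-\tfrac12[b_x,M].$$
The terms without $\la$ are $2\al b_x-\tfrac12 b_{xxx}$. A short bookkeeping of the $\la$ terms gives
$$\tfrac{\la}{2}[a,M]+\la\bigl(b_xM+Mb_x\bigr)+\tfrac{\la}{2}\bigl(bM_x+M_xb\bigr).$$

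\textbf{Matching the operator notation.} The last step is to identify this with the $\cal L_M$ terms in \eqref{compat}. With $\cal L_M=MD_x+D_xM$, the right action gives
$$\overset{\rightarrow}{\cal L}_M\,b=Mb_x+(Mb)_x=2Mb_x+M_xb.$$
The left action is the mirror image,
$$b\,\overset{\leftarrow}{\cal L}_M=b_xM+(bM)_x=2b_xM+bM_x .$$
Half their sum is exactly $b_xM+Mb_x+\tfrac12(bM_x+M_xb)$, which yields the second relation in \eqref{compat}.

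\textbf{Main obstacle.} The only delicate point is bookkeeping in the noncommutative setting. The order of $b$, $b_x$ and $M$ must be preserved at every step. In particular, $[c,M]$ has to be rewritten using the first compatibility relation, so that all $a_{xx}$ terms collapse into the symmetric anticommutator form. The convention for the left action of $\cal L_M$ should be fixed precisely as above so that the final formula matches \eqref{compat}.
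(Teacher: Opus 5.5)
Your computation is correct: the two coefficient conditions, the substitutions for $a_{xx}$ and $[c,M]$, and your reading $b\,\overset{\leftarrow}{\cal L}_M=b_xM+(bM)_x$ all check out and reproduce \eqref{compat} exactly. The paper does not prove the lemma itself; it cites \cite{BS2025} and later notes that the zero-curvature equation $\bF_t-\bG_x+[\bF,\bG]=0$ for the first-order system \eqref{FGlin} gives the same conditions, which is your calculation written in matrix form (the first-row entries vanish identically by the construction of $\bG$, and the $(2,2)$ and $(2,1)$ entries are exactly your $\Phi_x$- and $\Phi$-coefficient conditions).
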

The equation (\ref{vecsys}) is an isospectral evolution, 
corresponding to a particular solution of the compatibility 
conditions (\ref{compat}), obtained by choosing 
$$
a=a_0\in\Cl_2, \qquad b = b_0 +\frac{b_{-1}}{\la}\quad\mathrm{with}
\quad b_0 \in \Cl_0,\,\, b_{-1}\in \Cl_1, 
$$
and this gives a consistent result provided that $b_{-1}$ is constant. Without loss of generality, we can always gauge the linear system by multiplying $\Phi$ in the linear system by a constant element (corresponding to a rotation in $SO(d)$), so that up to scale 
we can fix $b_{-1}=e_1$. Then we find 
\beq\label{coeffs}
b_0 = u \in\F 
, \qquad b_{-1}=e_1\in W, \qquad a_0 
=\sum_{\nu=2}^d v_{1\nu}\, e_1 e_\nu \in \Cl_2, 
\eeq 
where $u$ and the coefficients appearing in 
$a_0$ 
are subject to the constraints \eqref{constraints}.

We conclude this section with several remarks regarding the place of \eqref{vecsys} in a larger context of integrable systems.  
We note that we can rewrite \eqref{vecsys} as an evolution equation in the Clifford 
algebra, namely 
\beq\label{mtcliff}
M_t =\tfrac{1}{2}(A_M D_x + D_x A_M) \, 
u 
-\tfrac{1}{2}C_M D_x^{-1} C_M \, e_1, 
\eeq 
with $u=b_0,e_1=b_{-1}$ as in \eqref{coeffs}, 
where $A_M,C_M$ respectively denote the anticommutator/commutator with respect to $M$. 

A pair of elementary symmetries is produced by the choices 
$$ 
a=a_0 \in\Cl_2, \qquad b=b_0 \in \Cl_0, \qquad a_0,b_0\quad\mathrm{const} , 
$$
for where the constants  $a_0,b_0$ are arbitrary elements 
of degree 2 and 0, respectively, and the corresponding symmetries 
can be written in combination as the $s$-flow
$$ 
M_s =\tfrac{1}{2}[a_0,M]+ b_0 M_x, 
$$
while the first non-trivial symmetry is 3rd order, being given by the flow  
\beq\label{3rdorderflow} 
M_\tau =\tfrac{1}{2} (4\al D_x - D_x^3) \, \left(|\bm|^{-\frac{3}{2}} \, M \right).  
\eeq 
The latter is Hamiltonian, as it can be expressed as 
$$ 
M_\tau = {\cal B}_1 \frac{\delta H}{\delta M}, $$
with the Hamiltonian operator 
\begin{equation} \label{eq:B1}
{\cal B}_1 = (4\al D_x - D_x^3) 
\end{equation}
and conserved functional 
\beq\label{fnal}
H =\int (\mathrm{tr}\, M^2)^{\frac{1}{4}}\, \dd x. 
\eeq 
The conserved density associated with $H$ is 
$$ 
\rho = (\mathrm{tr}\, M^2)^{\frac{1}{4}} = 
\Big(\sum_{\mu=1}^d m_\mu^2 \Big)^{\frac{1}{4}}= |\bm|^{\frac{1}{2}}, 
$$
which for $d=1$ recovers 
the well-known density $\sqrt{m}$ for the Camassa--Holm equation. 
The gradient of the Hamiltonian $H$ is in the kernel of the 
second Hamiltonian structure ${\cal B}_2$, which is compatible with 
${\cal B}_1$, being given by the operator 
\begin{equation} \label{eq:B2}
{\cal B}_2 = 
(A_M D_x + D_x A_M) +C_M D_x^{-1} C_M 
\end{equation}
(cf.\ \eqref{mtcliff} above). 
These operators were found in the work of Olver and Sokolov, who 
considered integrable Hamiltonian systems defined in associative algebras 
\cite{os}.

 \section{Two-component case: reciprocal transformation, 
 travelling waves and Hirota--Satsuma} 
\setcounter{equation}{0}

In this section we focus on 
the case $d=2$ of  \eqref{eqcpts}, which is the two-component system
\beq\label{2cpt}
\begin{array}{rcl}
    m_{1,t} & = & 2 u_x m_1 +um_{1,x}+ vm_2,  \\
      m_{2,t} & = &  2u_x m_2 +um_{2,x} -  v m_1, \qquad 
\end{array}
\eeq 
with auxiliary scalar fields $u,v$ being related to $m_1,m_2$ by 
the constraints 
\beq\label{uveqs}
u_{xx}-u =2m_1, \qquad v_x = -2m_2. 
\eeq
(Note that we fixed $\al=1/4$ here.) We have written \eqref{2cpt} in the case of Euclidean signature. As previously mentioned in the introduction, up to a linear transformation and rescaling, this two-component system is equivalent to the beam system introduced by Beals and Szmigielski in \cite{BS2022}. 

To begin with, we present a reciprocal transformation, which changes the conservation laws in the independent variables $x,t$ into conservation laws involving new variables $X,T$.   
We discuss how this transforms the linear system (\ref{Phixx}) and (\ref{Phit}) for $d=2$, thereby producing a new Lax pair in terms of  the  new variables $X,T$. 

Next, we proceed to consider the most elementary solutions of \eqref{2cpt}, namely travelling waves: after proving an elementary result on the non-existence of such waves with vanishing boundary conditions, we present a general lemma on the correspondence between travelling waves of different PDE systems related to one another by a reciprocal transformation. 
We then use this result to produce a hodograph link between the travelling waves of \eqref{2cpt} with a system of Hamiltonian ODEs with two degrees of freedom, and prove that this is integrable in the Liouville sense. 

In a subsequent subsection, we describe the connection between 
\eqref{2cpt} and the hierarchy of 
the Hirota--Satsuma equation, 
a well-known integrable PDE. 
This leads onto the following section, where we proceed to present a classification of all two-component deformations of the Camassa--Holm equation, including \eqref{2cpt}, which possess a higher local symmetry whose undeformed version has order 3. 

\subsection{Reciprocal transformation and Lax pair}\label{sec:RT}

The linear system consisting of \eqref{Phixx} and \eqref{Phit} can be written in the first order form 
\beq\label{FGlin}
\Psi_x = \bF\Psi, \qquad \Psi_t=\bG\Psi, 
\eeq 
where 
$$ 
\Psi = \left(\begin{array}{c} \Phi\,\, \\ 
\,\,\,\Phi_x\,\,\end{array}
\right) 
$$
and $F,G$ are $2\times 2$ matrices with entries in 
$\Cl (W)$, namely 
\beq\label{FGeqs}
\bF = \left( \begin{array}{cc}  0 & 1 \\ 
\al 1+\la M & 0 
\end{array}
\right) , \qquad 
\bG = \left( \begin{array}{cc}  \frac{1}{2}(a-b_x) & b \\ 
\frac{1}{2}(a_x-b_{xx})+\al b +\la bM & \frac{1}{2}(a+b_x)
\end{array}
\right) . 
\eeq 
The compatibility condition for 
\eqref{FGlin} is the zero curvature equation 
$$
\bF_t-\bG_x+[\bF,\bG]=0, 
$$
which produces the same set of 
conditions \eqref{compat} as in Lemma \ref{comlem}.

For what follows, in the case $d=2$ it is convenient to pick a specific $2\times 2$ matrix representation for the Clifford algebra $\Cl (W)$ with generators $e_1.e_2$. This is a 4-dimensional algebra spanned by 
$1,e_1,e_2,e_1e_2$, and by fixing particular representatives for the generators we may take  
$$ 
1\rightarrow \left(\begin{array}{cc} 1 & 0 \\ 
0 & 1
\end{array}\right), \quad 
e_1 \rightarrow \left(\begin{array}{cc} 0 & 1 \\ 
1 & 0
\end{array}\right), \quad e_2 \rightarrow \left(\begin{array}{cc} 0 & -\ri \\ 
\ri & 0
\end{array}\right) ,  
\quad 
e_1e_2 \rightarrow \left(\begin{array}{cc} \ri & 0 \\ 
0 & -\ri
\end{array}\right), 
$$
resulting in a $4\times 4$ representation of the 
linear system \eqref{FGlin}. Upon taking 
$$M = m_1e_1+m_2 e_2, \qquad a=v e_1e_2, \qquad 
b=u + \la^{-1}e_1, 
$$ 
in accordance with \eqref{coeffs} when $d=2$, and fixing $\al=\frac{1}{4}$, we obtain the 
linear system for \eqref{2cpt}  rewritten in the 1st order form \eqref{FGlin},
with the $4\times 4$ matrices $\bF,\bG$ 
built from the appropriate $2\times 2$ blocks, 
given by 
\beq \label{FGforms} \begin{array}{rcl}
\bF& = &\left( \begin{array}{cccc}  0 & 0 & 1 & 0 \\
0 & 0 & 0 & 1  \\
\frac{1}{4} & \la (m_1-\ri m_2) & 0 & 0 \\
\la (m_1+\ri m_2) & \frac{1}{4} & 0 & 0 
\end{array}
\right) , \\ \\
\bG & = & \left( \begin{array}{cccc}  -\frac{1}{2}(u_x-\ri v) & 0 & u & \la^{-1}  \\ 
0 &  -\frac{1}{2}(u_x+\ri v)  & \la^{-1} & u  \\
 -\frac{1}{4}u & \la u(m_1-\ri m_2)+\frac{1}{4}\la^{-1} & \frac{1}{2}(u_x+\ri v) & 0  \\
\la u(m_1+\ri m_2)+\frac{1}{4}\la^{-1} &  -\frac{1}{4}u  & 0 &  \frac{1}{2}(u_x-\ri v)   
\end{array}
\right) .
\end{array}
\eeq 

Our main goal in this section is to transform the partial differential equations (\ref{2cpt}) and the associated linear system \eqref{FGlin} by means of the reciprocal transformation 
\beq\label{RT}
\rd X =\rho\, \rd x + \rho u\, \rd t, \qquad \rd T=\rd t, 
\eeq 
where $\rho$ is the density associated with the functional (\ref{fnal}), that is 
\beq\label{rhodefn2}
\rho = |\bm|^{\frac{1}{2}}=\big( m_1^2 +m_2^2 \big)^\frac{1}{4}
\eeq 
when $d=2$. This means that  derivatives with respect to the independent variables transform according to  
$$ 
D_x = \rho\, D_X, \qquad D_t =D_T +\rho u\, D_X.
$$
The consistency of the reciprocal transformation is guaranteed by the conservation law 
$$ 
\rho_t = (\rho u)_x, 
$$ which transforms under \eqref{RT} to another conservation law in the new independent variables $X,T$, namely 
\beq\label{rhoinv} 
\big(\rho^{-1}\big)_T=-u_X. 
\eeq 
Then we find that it is convenient to introduce a new dependent 
variable $\vartheta$, such that 
$$
m_1=\rho^2 \sin\vartheta, \qquad m_2 = \rho^2 \cos\vartheta, 
$$
and
we find that the system \eqref{2cpt} is transformed to a pair of equations for $\rho$ and $\vartheta$, namely (\ref{rhoinv}) and the 
relation 
\beq \label{vtheta}
\vartheta_T = v. 
\eeq 
However, the way that the relations \eqref{uveqs} between $u,v$ and the $m_j$ transform under the reciprocal transformation must also be taken into account. 

\begin{proposition} \label{recipsys} 
In terms of $\vartheta$ and another variable $\uppsi=\log\rho$, under the reciprocal transformation 
\eqref{RT} the system \eqref{2cpt} is transformed 
into the pair of equations 
\beq \label{altRTsys}
\begin{array}{rcl}
D_X \big( e^\uppsi \uppsi_{XT}-2e^{2\uppsi} \sin\vartheta \big)  + D_T \big( e^{-\uppsi}\big) &=&0, \\
\vartheta_{XT} + 2e^\uppsi \cos\vartheta & = & 0. 
\end{array}
\eeq 
The latter system is the compatibility condition for the Lax pair 
\beq\label{FGhatlin}
\hat{\Psi}_X = \hat{\bF}\hat{\Psi}, \qquad \hat{\Psi}_T=\hat{\bG}\hat{\Psi}, 
\eeq 
where 
$$
\hat{\bF} = \left(\begin{array}{cccc} 
0 & 0 & e^{-\uppsi} & 0 \\ 
0 & 0 & 0 & e^{-\uppsi} \\ 
\frac{1}{4}e^{-\uppsi} & -\ri\la e^{\uppsi+\ri\vartheta} & 0 & 0 \\ 
\ri\la e^{\uppsi-\ri\vartheta} & \frac{1}{4}e^{-\uppsi} & 0 & 0
\end{array}
\right) ,
$$
$$
\hat{\bG} = \left(\begin{array}{cccc} 
-\tfrac{1}{2}(\uppsi_T-\ri\vartheta_T) & 0 & 0 & \la^{-1} \\ 
0 & -\tfrac{1}{2}(\uppsi_T+\ri\vartheta_T) & \la^{-1} 
& 0 \\ 
 -\frac{1}{2}e^\uppsi \uppsi_{XT}+e^{2\uppsi} \sin\vartheta   & \frac{1}{4}\la^{-1} & \tfrac{1}{2}(\uppsi_T+\ri\vartheta_T) & 0 \\ 
\frac{1}{4}\la^{-1} & -\frac{1}{2}e^\uppsi \uppsi_{XT}+e^{2\uppsi} \sin\vartheta  & 0 & \tfrac{1}{2}(\uppsi_T-\ri\vartheta_T) 
\end{array}
\right) 
$$
\end{proposition}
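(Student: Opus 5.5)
The plan is to derive \eqref{altRTsys} directly from \eqref{2cpt}--\eqref{uveqs} by rewriting everything in the new variables. After that, I would obtain the Lax pair \eqref{FGhatlin} by transforming the first order system \eqref{FGlin}--\eqref{FGforms} under \eqref{RT}, with no extra gauge transformation expected.

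First, I would check that \eqref{2cpt} in polar form $m_1=\rho^2\sin\vartheta$, $m_2=\rho^2\cos\vartheta$ splits into two parts.
\begin{itemize}
\item The modulus equation is $(\rho^4)_t=4u_x\rho^4+u(\rho^4)_x$, i.e.\ $\rho_t=(\rho u)_x$. This becomes \eqref{rhoinv} after the substitution $D_t=D_T+\rho u D_X$.
\item The angle equation is $\vartheta_t-u\vartheta_x=v$, which becomes \eqref{vtheta}.
\end{itemize}
Both follow by taking the combinations $m_1m_{1,t}+m_2m_{2,t}$ and $m_2m_{1,t}-m_1m_{2,t}$, in which the $v$-terms respectively cancel and survive.

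Second, I would eliminate $u$ and $v$ using the constraints \eqref{uveqs}.
\begin{itemize}
\item Since $v=\vartheta_T$ and $D_x=e^{\uppsi}D_X$, the relation $v_x=-2m_2$ reads $e^{\uppsi}\vartheta_{XT}=-2e^{2\uppsi}\cos\vartheta$. This is the second equation of \eqref{altRTsys}.
\item From \eqref{rhoinv} we have $u_X=-(e^{-\uppsi})_T=e^{-\uppsi}\uppsi_T$. Hence $u_x=\rho u_X=\uppsi_T$ and $u_{xx}=e^{\uppsi}\uppsi_{XT}$.
\item The constraint $u_{xx}-u=2m_1$ then gives the explicit formula $u=e^{\uppsi}\uppsi_{XT}-2e^{2\uppsi}\sin\vartheta$.
\item Substituting this back into $u_X+(e^{-\uppsi})_T=0$ yields the first equation of \eqref{altRTsys}.
\end{itemize}
Conversely, \eqref{altRTsys} together with the definitions $u:=e^{\uppsi}\uppsi_{XT}-2e^{2\uppsi}\sin\vartheta$ and $v:=\vartheta_T$ reconstructs \eqref{rhoinv}, \eqref{vtheta} and the constraints. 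So the systems are equivalent.

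For the Lax pair, I keep $\hat\Psi=\Psi=(\Phi,\Phi_x)^T$ and write $\Psi_X=\rho^{-1}\Psi_x$ and $\Psi_T=\Psi_t-\rho u\Psi_X$. This gives $\hat\bF=e^{-\uppsi}\bF$ and $\hat\bG=\bG-u\bF$.
\begin{itemize}
\item In $\hat\bF$, the off-diagonal entries use $m_1\mp\ri m_2=\mp\ri\rho^2e^{\pm\ri\vartheta}$, which after the factor $e^{-\uppsi}$ gives $\mp\ri\la e^{\uppsi\pm\ri\vartheta}$.
\item In $\hat\bG$, the $u$ in the top-right block cancels, and the diagonal entries follow from $u_x=\uppsi_T$ and $v=\vartheta_T$.
\item The $\la u(m_1\mp\ri m_2)$ terms cancel in the bottom-left block. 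The remaining scalar entry is $-\tfrac{1}{2}u$, which by the formula for $u$ equals $-\tfrac12e^{\uppsi}\uppsi_{XT}+e^{2\uppsi}\sin\vartheta$.
\end{itemize}
This should reproduce the stated $\hat\bF,\hat\bG$ exactly.

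Finally, I would verify that the zero curvature condition $\hat\bF_T-\hat\bG_X+[\hat\bF,\hat\bG]=0$ is equivalent to \eqref{altRTsys}. Its transformation from the $(x,t)$ zero curvature condition guarantees that it holds on solutions. I expect the main obstacle to be the converse: showing it imposes nothing beyond \eqref{altRTsys}. Here the entry $\hat\bG_{31}$ must be treated as an independent function $w$, not as $-u/2$. Expanding order by order in $\la$, I expect the $\la^{0}$ terms to force $w_X$ to match $\tfrac12(e^{-\uppsi})_T$. Together with the definition of $w$, this yields the first equation of \eqref{altRTsys}. The $\la^{1}$ terms should give $\vartheta_{XT}+2e^{\uppsi}\cos\vartheta=0$, and the $\la^{-1}$ terms should hold identically. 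I would carry out this block computation using the $2\times2$ block structure to keep it manageable.
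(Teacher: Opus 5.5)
Your proposal is correct and follows essentially the same route as the paper: polar variables giving \eqref{rhoinv} and \eqref{vtheta}, then elimination of $u$ via $u_x=\uppsi_T$, $u_{xx}=e^{\uppsi}\uppsi_{XT}$ to get \eqref{urhorel}, and finally the Lax pair from $\hat{\bF}=\rho^{-1}\bF$, $\hat{\bG}=\bG-u\bF$ (you also correctly keep the factor $2$ in $(\rho D_X)^2u-u=2\rho^2\sin\vartheta$, which the paper's text drops). One small correction to your forecast for the converse check: the $\la^1$ terms vanish identically, and $\vartheta_{XT}+2e^{\uppsi}\cos\vartheta=0$ instead comes from the $\la^0$ bivector part of the diagonal blocks (the $\la^{-1}e_1$ in $\hat{\bG}$ times the $\la M$ in $\hat{\bF}$), whose scalar part also forces $w=-\tfrac12e^{\uppsi}\uppsi_{XT}+e^{2\uppsi}\sin\vartheta$.
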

\begin{proof} 
Under the reciprocal transformation, the relation between $u$ and $m_1$ produces 
$ (\rho D_X)^2 u - u = \rho^2 \sin\vartheta$, and this can be solved for $u$, with $u_X$ and $u_{XX}$ replaced in terms of $\rho$ from \eqref{rhoinv}, to obtain 
\beq\label{urhorel}
u = \rho(\log\rho)_{XT}-2\rho^2\sin\vartheta 
.
\eeq
Similarly, the second equation in \eqref{uveqs} gives the relation between $v$ and $m_2$, and this transforms to $\rho v_X=-2\rho^2\cos\vartheta$, in which $v$ can be replaced using   \eqref{vtheta}. After substituting the formula \eqref{urhorel} for $u$ back into \eqref{rhoinv}, 
this yields 
a closed system of two equations for 
$\rho$ and $\vartheta$, namely 
\beq \label{RTsys}
\begin{array}{rcl}
D_X \big( \rho (\log \rho)_{XT}-2\rho^2 \sin\vartheta \big)  + D_T \big( \rho^{-1}\big) &=&0, \\
\vartheta_{XT} + 2\rho\cos\vartheta & = & 0. 
\end{array}
\eeq 
After introducing $\uppsi$ via $\rho=e^\uppsi$, 
the above system can be rewritten in the form 
\eqref{altRTsys}. The same conclusion can be reached 
in a straightforward manner,  
by applying the reciprocal transformation \eqref{RT} directly to the linear system \eqref{FGlin}: indeed, if we specify the wave function as a function of the new independent variables, writing $\hat\Psi(X,T)=\Psi(x,t)$, 
then the transformed linear equations 
$$ 
\rho\hat\Psi_x = \bF\hat\Psi, \qquad 
\hat\Psi_T +\rho u\hat\Psi_X = \bG\hat\Psi
$$
immediately imply that 
$\hat \bF = \rho^{-1} \bF$, $\hat \bG= \bG - u\bF$. After substituting $m_2+\ri m_1 = \rho^2 e^{\ri\vartheta}$ into the original matrices \eqref{FGforms}, the compatibility conditions obtained from the new zero curvature equation 
$[D_X-\hat\bF,D_T-\hat\bG]=0$ consist of \eqref{vtheta} 
and \eqref{urhorel}, which can be rewritten 
in terms of $\uppsi$ as 
\beq\label{uformXT}
u = e^\uppsi \uppsi_{XT}-2e^{2\uppsi} \sin\vartheta, 
\eeq 
together with the system \eqref{RTsys}, or equivalently 
\eqref{altRTsys}. 
In the transformed Lax pair, we can then use  
\eqref{vtheta} 
and \eqref{urhorel} to write all the entries purely in terms of 
$\uppsi,\vartheta$ and their derivatives. 
\end{proof}

\begin{remark}
Note that the first equation of the system (\ref{altRTsys}) 
is written in conservation form, being equivalent to the equation \eqref{rhoinv}. This system also has the conservation law 
\beq\label{consRT}
\hat{\cal U}_T +  (2e^\uppsi \sin\vartheta)_X=0,  
\eeq 
where 
\beq\label{kdvvar}
\hat{\cal U} = -\frac{1}{2}\uppsi_{XX}-\frac{1}{4}\uppsi_X^2-\frac{1}{4}e^{-2\uppsi}
+\frac{1}{4}\vartheta_X^2.
\eeq
When $\theta=\pi/2$, the quantity $\hat{\cal U}$ becomes the 
standard KdV dependent variable, and  (\ref{altRTsys}) reduces to the first negative flow of the KdV hierarchy 
(see e.g.\ \cite{honech}). In the original coordinates, this reduction corresponds precisely to the special case $v=0=m_2$, when the two-component system reduces to the Camassa--Holm equation.  
\end{remark}

\subsection{Travelling wave reduction} \label{sec:travelling wave}

In this subsection, our main goal will be to describe the properties of the travelling wave solutions in the case $d=2$, namely for the two-component system \eqref{2cpt}. However, we begin by making some remarks about the case general of general $d$.  
We choose the travelling wave 
coordinate to be $z=x-ct$, let prime  denote $\frac{\rd}{\rd z}$ 
and assume zero background for the fields, which depend on $x$ and $t$ through the combination $z$ only, so that  
$$
u,\,\bm  \to 0 \qquad \mathrm{as} \qquad |z|\to\infty.
$$
Then from  \eqref{vecsys}
we obtain 
\begin{equation}\label{eq:M-ode}
-(c+u)\,\bm ' = 2u'\,\bm + V\bm.  
\end{equation}
Taking the inner product with $\bm$, and using the fact that the matrix $V$ is skew-symmetric, we obtain

\begin{equation}\label{eq:E-inv}
\frac{\rd}{\rd z}\log\big(|\bm|^2)\big) = -\frac{4u'}{c+u},
\end{equation}
and hence find  the invariant
\begin{equation}\label{eq:E-curve}
K= |\bm |^2 \,(c+u)^4 = \mathrm{const}.
\end{equation}
Zero background gives $K=0$, which by \eqref{constraints} forces $(u''-4\alpha u,\bv')\equiv(0,0)$ for smooth profiles; therefore 
nontrivial travelling waves with zero background cannot be smooth, so must be peaked/cusped, being realized when $c+u$ vanishes at a crest. So we have 
\begin{proposition}\label{peakedwave} For any $d$, the system  \eqref{vecsys} has no nontrivial smooth travelling waves with zero background. 
\end{proposition}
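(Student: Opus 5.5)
The plan is to make rigorous the argument sketched just before the statement. Fix $c$, and suppose $(u,\bm,\bv)$ is a smooth travelling wave in $z=x-ct$ with $u,\bm\to 0$ as $|z|\to\infty$. Then \eqref{eq:M-ode} holds pointwise. Taking the Euclidean inner product with $\bm$ and using skew-symmetry of $\bV$ (so that $\bm\cdot\bV\bm=0$) gives
$$
-\tfrac12 (c+u)\,(|\bm|^2)' = 2u'|\bm|^2 .
$$
The key observation is that this identity alone yields
$$
\big(|\bm|^2(c+u)^4\big)' = (c+u)^3\big[(c+u)(|\bm|^2)' + 4u'|\bm|^2\big] = 0,
$$
with no division by $c+u$. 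So $K=|\bm|^2(c+u)^4$ is constant on all of $\R$ for a smooth profile, avoiding the logarithmic form \eqref{eq:E-inv}. Letting $z\to\pm\infty$ with $\bm\to0$ and $u\to 0$ gives $K=0$.

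Hence at every $z$ either $\bm(z)=0$ or $u(z)=-c$. I would split into cases on $c$.

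\emph{Case $c\neq 0$.} Since $u\to0$, the closed set $\{u=-c\}$ is compact. On the open set $\{u\neq -c\}$ we have $\bm\equiv0$, so by \eqref{constraints} $u''-4\al u=0$ and $\bv'=0$ there.

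\emph{Case $c=0$.} Consider the open set $\{u\neq0\}$. On it $\bm=0$, so $u''=4\al u$ on each component interval. A component extending to $\pm\infty$ must be killed by the decay, and on a bounded component $u$ is a solution of $u''=4\al u$ vanishing at both endpoints.

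The main obstacle is showing that the \enquote{crest set} $\{u=-c\}$ cannot carry nonzero $\bm$ for a smooth profile. I would handle it as follows. On the interior of $\{u=-c\}$, $u$ is constant, so $2m_1=u''-4\al u=4\al c$. If $\al c\neq0$, this is incompatible with compactness of the set and with $u\to0$ unless the interior is empty. On the set's boundary and isolated points, smoothness and continuity force $\bm=0$, since $\bm$ vanishes on the dense complement.

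More cleanly, I would argue by continuity: $\bm$ vanishes on the closure of $\{u\neq -c\}$. So wherever $\bm\neq0$ we have $u\equiv -c$ on an open interval $I$, and there $m_1=-2\al u=2\al c$ is constant. Take a maximal such interval. Either $I$ is unbounded, which contradicts $u\to0$ when $c\neq0$; or at an endpoint of $I$ continuity gives $m_1=2\al c$, whereas $\bm$ vanishes at limit points from outside. This forces $\al c=0$, hence $m_1=0$ on $I$. Then only the components $m_\mu$ for $\mu\ge2$ could survive on $I$.

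For those components, \eqref{eqcpts} with $c+u=0$ and $u'=0$ gives $v_{1\mu}m_1=0$ and $\sum_\nu v_{1\nu}m_\nu=0$, together with $v_{1\mu}'=-2m_\mu$. I expect this to force $\bm=0$ by a further continuity/ODE argument. Concretely, $(\sum_\nu v_{1\nu}^2)'=-4\sum_\nu v_{1\nu}m_\nu=0$, so $\bv$ has constant length on $I$ and the same constant value at the endpoint. This is the delicate step.

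If the endpoint analysis becomes messy, the fallback is to state the conclusion modulo the genericity implicit in the paper's sketch, namely that $c+u$ vanishes only on a set with empty interior. In that case $\bm\equiv0$ everywhere by continuity. Then $\bv'=0$ together with decay gives $\bv\equiv0$. Finally, $u''-4\al u=0$ with $u\to0$ gives $u\equiv0$ for $\al\neq0$, or $u$ linear, hence zero, for $\al=0$. So the wave is trivial.
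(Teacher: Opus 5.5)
\textbf{There is a genuine gap, at exactly the step you flag as delicate.} Your identity $\big(|\bm|^2(c+u)^4\big)'=0$, obtained without dividing by $c+u$, improves on the paper's logarithmic computation. Your endpoint argument also correctly shows that wherever $\bm\neq0$ one has $u\equiv -c$ on an open interval $I$ with $\alpha c=0$, hence $m_1=0$ there.

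However, no continuity/ODE argument confined to $I$ and its endpoints can force $\bm_\perp=(m_2,\dots,m_d)$ to vanish when $d\ge3$. On $I$ the system reduces to $m_1=0$, $\bv\cdot\bm_\perp=0$ and $\bv'=-2\bm_\perp$. These hold for $\bv=R(\cos\phi,\sin\phi,0,\dots,0)$ and $\bm_\perp=-\tfrac12\bv'$, with any smooth $\phi$ whose derivative is supported inside $I$. Such an $\bm_\perp$ vanishes near the endpoints, so continuity gives no obstruction. Your fallback, that $\{u=-c\}$ has empty interior, is precisely the point that needs proof. It is also what the paper's sketch tacitly assumes when it passes from $K=0$ to $(u''-4\alpha u,\bv')\equiv(0,0)$, so the fallback adds nothing beyond the paper.

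\textbf{The missing idea is global rather than local.} You have in fact already shown $m_1\equiv0$ on all of $\R$: $\bm=0$ on the closure of $\{u\neq-c\}$, and $m_1=2\alpha c=0$ on the interior of $\{u=-c\}$. So the constraint gives $u''=4\alpha u$ on $\R$, and decay forces $u\equiv0$.

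For $c\neq0$ the crest set $\{u=-c\}$ is then empty, so $\bm\equiv0$ and $\bv$ is constant. This proves the statement for every $d$. It also replaces your unfinished $c=0$ component analysis: for $\alpha<0$, bounded components of $\{u\neq0\}$ carrying sine arcs are not excluded merely by vanishing at their endpoints.

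For $c=0$ you are left with $u\equiv0$, $m_1\equiv0$ and $(|\bv|^2)'=-4\bv\cdot\bm_\perp\equiv0$. This gives triviality for $d=2$. For $d\ge3$, however, the rotating profile above, with $\phi$ constant outside a compact set, is a nontrivial smooth stationary ($c=0$) solution with $u\equiv0$ and compactly supported $\bm$. It satisfies the paper's stated boundary conditions $u,\bm\to0$. Taking $\phi$ to run from $0$ to $\pi$ even matches the normalisation $\bv(+\infty)=-\bv(-\infty)$ used in the Hunter--Saxton section.

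So the $c=0$ case closes only under an extra hypothesis, or by restricting to $c\neq0$. One such hypothesis is $\bv\to0$ at one end: then the globally constant $|\bv|$ vanishes, giving $\bm_\perp\equiv0$. Your fallback silently imports this when it says decay gives $\bv\equiv0$.
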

%
In the light of the above result, the latter  sections of the paper  will focus on peaked soliton (peakon) solutions of \eqref{vecsys},  which are weak solutions. The rest of this subsection is 
concerned with the smooth travelling waves for $d=2$. 

In order to analyze the ODEs for travelling waves of the two-component system \eqref{2cpt}, we will make use of the reciprocal transformation \eqref{RT}, which can be reduced to the level of these ODEs. Our treatment relies on a general feature of similarity reductions of reciprocal transformations (see \cite{bh}, for instance), which is that under a change of independent variables, the roles of parameters and constants of motion can be switched. In particular, in the case of hodograph transformations (changing the time variable) in Hamiltonian systems, this phenomenon is referred to as ``coupling constant metamorphosis'' \cite{meta1, meta2}. For the context at hand, we can state the following

\begin{lemma} \label{lem:RTV}
Under the reciprocal transformation given by 
\beq\label{rfrt}
\rd X = \rho\, \rd x +{\cal F}\, \rd t, \qquad 
\rd T = \rd t , 
\eeq 
the travelling wave solutions of the conservation law 
$$ 
\rho_t = {\cal F}_x, 
$$
moving with speed $c$, 
admit the conserved quantity 
\beq\label{kcons}
k = {\cal F} + c\rho =\mathrm{const}.
\eeq 
Moreover, for a fixed value of this constant, these solutions correspond to travelling waves of the reciprocally transformed PDE 
\beq\label{rtpde}
(\rho^{-1})_T + ({\cal F} \rho^{-1})_X = 0
\eeq 
moving with speed $k$. 
\end{lemma}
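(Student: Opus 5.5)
The plan is to do the computation directly, working with the travelling-wave ansatz in both coordinate systems.

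\textbf{The constant $k$.} Suppose $\rho$ and ${\cal F}$ depend on $(x,t)$ only through $z=x-ct$. Then $\rho_t=-c\rho'$ and ${\cal F}_x={\cal F}'$. The conservation law $\rho_t={\cal F}_x$ becomes $({\cal F}+c\rho)'=0$, so $k={\cal F}+c\rho$ is constant. This gives \eqref{kcons}.

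\textbf{Restricting the reciprocal transformation.} Next I restrict \eqref{rfrt} to such solutions. Since $\rd x=\rd z+c\,\rd t$, we get
$$\rd X=\rho\,\rd z+(c\rho+{\cal F})\,\rd t=\rho\,\rd z+k\,\rd T.$$
Hence $Z:=X-kT$ satisfies $\rd Z=\rho(z)\,\rd z$, so $Z$ is a function of $z$ alone. On any interval where $\rho\neq 0$, this relation is locally invertible and $z$ becomes a function of $Z$. Therefore $\rho$ and ${\cal F}$, viewed as functions of $(X,T)$, depend only on $Z=X-kT$; that is, they are travelling waves with speed $k$ in the new variables.

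\textbf{Checking the transformed PDE.} I would also check directly that \eqref{rtpde} holds in this ansatz. With $'$ now meaning $\rd/\rd Z$,
$$(\rho^{-1})_T+({\cal F}\rho^{-1})_X=\big(-k\rho^{-1}+{\cal F}\rho^{-1}\big)'=\big(({\cal F}-k)\rho^{-1}\big)'=(-c)' =0,$$
using ${\cal F}-k=-c\rho$. This is consistent with \eqref{rtpde} being the transformed conservation law, as in \eqref{rhoinv}.

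\textbf{Converse direction.} Starting from a travelling wave of \eqref{rtpde} with speed $k$, integrating gives $({\cal F}-k)\rho^{-1}=\text{const}$; calling this constant $-c$ recovers ${\cal F}+c\rho=k$. Inverting via $\rd z=\rho^{-1}\rd Z$ and $t=T$ then returns a wave of speed $c$ in $(x,t)$. Here the roles of the parameter $c$ and the constant of motion $k$ are interchanged, which is the ``coupling constant metamorphosis'' described before the lemma.

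\textbf{Main obstacle.} The only delicate step is justifying that $Z=X-kT$ depends on $z$ alone. This relies on fixing the integration constant in the definition of $X$ consistently, and on the invertibility of $z\mapsto Z$, which requires $\rho\neq 0$ (for example $\rho>0$ for the density \eqref{rhodefn2}). I would state this as a standing assumption: work on intervals where $\rho$ does not vanish.
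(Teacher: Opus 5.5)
Your proposal is correct and follows essentially the same route as the paper. It integrates the reduced conservation law to get $k$, reduces the reciprocal transformation to $\rd Z=\rho\,\rd z$ with $Z=X-kT$, and then integrates the travelling-wave reduction of \eqref{rtpde}, with $c$ reappearing as the integration constant. Your caveat that $\rho\neq 0$ is needed for $z\mapsto Z$ to be invertible is sensible, and the paper leaves it implicit.
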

\begin{proof}
Upon reducing to travelling waves with speed $c$, 
the conservation law with density $\rho$ and flux 
$\cal F$ produces the ODE $-c\rho' = \cal F'$, where the prime denotes $\frac{\rd}{\rd z}$ as before. This can be 
rearranged as 
$$ 
\frac{\rd}{\rd z} \left( {\cal F} + c\rho \right) =0, 
$$
which integrates to give the conserved quantity $k$ as in 
\eqref{kcons}. 
Now consider the reduction of the reciprocal transformation \eqref{rfrt} to the travelling wave solutions: by a slight abuse of notation, we have the density $\rho = \rho(z)$, being a function of the similarity variable $z$, and  
$$ 
\rho \,\rd z = \rho (\rd x-c\,\rd t) 
= \rd X-{\cal F}  \, \rd t -c \rho\,\rd t = \rd X -k\,\rd T. 
$$
This can be rewritten as 
\beq\label{bigsz}
\rd Z = \rho(z)\, \rd z, 
\eeq 
where $Z=X-k T$ is a travelling wave variable in the new $X,T$ coordinates, with wave speed $k$. Then, with this value of the wave speed, we see from \eqref{rtpde}  that the travelling waves of the reciprocally transformed PDE satisfy 
$$ 
\frac{\rd}{\rd Z} \left(  k\rho^{-1} - {\cal F}\rho^{-1} \right) =0, 
$$
which integrates and rearranges to yield \eqref{kcons}, where now $c$ is an integration constant. Thus we see that, subject to the relation \eqref{bigsz} between the independent variables, there is a one-to-one correspondence between travelling wave solutions of the two reciprocally related PDEs, modulo the metamorphosis of coupling constants $c\leftrightarrow k$, exchanging the roles of these parameters. 
\end{proof}

The above result is directly applicable to the case at hand, where we have the reciprocal transformation  \eqref{RT}, with the flux of the conservation law being ${\cal F} = \rho u$. In the light of the preceding 
lemma, to determine the travelling waves of the system 
\eqref{2cpt} it will be sufficient to consider travelling waves of \eqref{altRTsys} moving with speed $k\neq 0$, where the original wave speed $c$ will appear as a parameter. Thus we take 
\beq\label{realvars} \uppsi = \psi(Z), \qquad \vartheta = \theta (Z), 
\qquad Z = X-kT .
\eeq 
In the course of doing so,  for the Hamiltonian theory it will be 
convenient to apply imaginary rescalings to some of the variables/parameters, and consider solutions of 
the form 
\beq\label{trans} \uppsi = \psi(Z), \qquad \vartheta = \ri \varphi(Z), \qquad Z = X-\ri CT, \qquad \mathrm{where} \quad k=\ri C, 
\eeq 
with the arbitrary constant $C\neq 0$ (excluding the stationary case $C=0$). 
In that case, the PDEs \eqref{altRTsys} reduce to a pair of ODEs for $\psi,\varphi$, given by 
\beq \label{ODEsys}
\begin{array}{rcl}
\frac{\rd}{\rd Z} \big( C e^\psi \psi''+2e^{2\psi} \sinh\varphi+C e^{-\psi}\big) &=&0, \\
C\varphi'' + 2e^\psi \cosh\varphi & = & 0, 
\end{array}
\eeq 
where now the prime denotes $\frac{\rd}{\rd Z}$. 
Since the first equation above is a total derivative, we can integrate it once, to obtain the constant of motion 
\beq\label{ddef}
K=-e^\psi \psi''-\frac{2}{C}e^{2\psi} \sinh\varphi-e^{-\psi}, \qquad \mathrm{where} \quad K=\frac{\ri c}{C} 
\eeq 
by the result of Lemma \ref{lem:RTV}. Note that we will ultimately be interested in the case where  $c$, the speed of the travelling waves of the two-component CH system, is real, and the parameter $C=-\ri k$ is pure imaginary, so the constant $K$ will also be real. 

After performing an integration as above, to obtain \eqref{ddef}, we find that overall the system \eqref{ODEsys} for the reciprocally transformed travelling wave reduction can be written in canonical Hamiltonian form, as 
\beq\label{hameq}
\begin{array}{rclrclrclrcl}
\psi'& =&\frac{\partial h}{\partial p_\psi}, &
\varphi'& =&\frac{\partial h}{\partial p_\varphi}, &
p_\psi'& =&-\frac{\partial h}{\partial \psi}, & 
p_\varphi'& =&-\frac{\partial h}{\partial \varphi}, 
\end{array}
\eeq 
where $p_\psi=\psi'$ and $p_\varphi=\varphi'$ are canonically conjugate momenta, and 
the Hamiltonian function is 
\beq\label{ham} h=
\tfrac{1}{2}p_\psi^2
+\tfrac{1}{2}p_\varphi^2
+2C^{-1}\,e^\psi\sinh\varphi-\tfrac{1}{2}e^{-2\psi}-Ke^{-\psi} .
\eeq 
In order to prove that the latter Hamiltonian system is completely integrable in the Liouville sense, we 
need to find a second independent constant of motion that is in involution with $h$, which 
is achieved by considering the reduction of the Lax pair to the travelling waves. 

\begin{theorem}\label{liouvint}
Via the hodograph transformation \eqref{bigsz} obtained from the density $\rho$ in \eqref{rhodefn2}, and the change of parameters and dependent variables \eqref{trans}, with $\rho =e^\psi$ and $c=-\ri \,CK$, the travelling wave solutions of the PDE system \eqref{2cpt} with speed $c$ correspond to the solutions of the Hamilton's equations \eqref{hameq} defined by \eqref{ham}. This Hamiltonian system with two degrees of freedom is the compatibility condition for the linear system   
\beq\label{LMlin}
\tilde{\Psi}' = \bM\tilde{\Psi}, \qquad \bL\tilde{\Psi}=\mu\tilde{\Psi}, 
\eeq 
where 
$$
\bL = 
\left(\begin{array}{cccc} 
\frac{1}{2}C(p_\psi+p_\varphi) & 0 & Ce^{-\psi} & -\ri\,\la^{-1} \\ 
0  &  \frac{1}{2}C(p_\psi-p_\varphi) & -\ri\,\la^{-1}  & Ce^{-\psi}  \\ 
-\frac{1}{4}C(e^{-\psi}+2K) 
& -\frac{\ri}{4}\la^{-1} -\ri\, Ce^{\psi-\varphi} \la & -\frac{1}{2}C(p_\psi-p_\varphi)     & 0 \\ 
-\frac{\ri}{4}\la^{-1} +\ri \,Ce^{\psi+\varphi} \la 
 & 
-\frac{1}{4}C(e^{-\psi}+2K) 
& 0 & \frac{1}{2}C(p_\psi+p_\varphi) 
\end{array}
\right) 
$$
and
$$
\bM = \left(\begin{array}{cccc} 
0 & 0 & e^{-\psi} & 0 \\ 
0 & 0 & 0 & e^{-\psi} \\ 
\frac{1}{4}e^{-\psi}
  & -\ri e^{\psi-\varphi}\la  & 0 & 0 \\ \ri e^{\psi+\varphi}\la
 & \frac{1}{4}e^{-\psi} & 0 & 0 \end{array}
\right) .
$$
Moreover, it is completely integrable in the Liouville sense, having a second independent first integral, given by 
\beq\label{Jham}
J = p_\psi^2p_\varphi^2 
+4C^{-1}\,e^{\psi}\cosh \varphi \, p_\psi p_\varphi 
-e^{-\psi}\big(e^{-\psi}+2K\big)\,  p_\varphi^2
+4C^{-2}\,e^{2\psi}\cosh^2 \varphi 
-4KC^{-1}\,\sinh \varphi .
\eeq 
\end{theorem}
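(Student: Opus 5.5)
The proof splits into three parts: the correspondence between travelling waves and the Hamiltonian system, the Lax representation, and Liouville integrability.

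\emph{Correspondence.} I will use Lemma \ref{lem:RTV} with ${\cal F}=\rho u$. A travelling wave of \eqref{2cpt} with speed $c$ then corresponds, through $\rd Z=\rho\,\rd z$, to a travelling wave of \eqref{altRTsys} with speed $k$. Substituting \eqref{trans} into \eqref{altRTsys} gives \eqref{ODEsys}. Integrating the first equation once gives \eqref{ddef}, where the integration constant is identified with $c$ via \eqref{kcons}: on the reduction, $u$ is given by \eqref{uformXT}, so $k=\rho u+c\rho$. After dividing by $\rho$ and using $k=\ri C$, this yields $c=-\ri CK$. Solving \eqref{ddef} for $\psi''$ and combining it with the $\varphi$ equation, I will check directly that the resulting pair of second-order ODEs coincides with Hamilton's equations \eqref{hameq} for $h$ in \eqref{ham}.

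\emph{Lax representation.} For the linear system \eqref{LMlin}, I observe that $\bM$ is exactly $\hat\bF$ from Proposition \ref{recipsys} after the substitution $\vartheta=\ri\varphi$. On travelling waves, $D_T=-kD_Z$, so the $T$-flow $\hat\Psi_T=\hat\bG\hat\Psi$ becomes an eigenvalue problem. Concretely, one seeks $\hat\Psi=e^{\mu T}\tilde\Psi(Z)$ (up to normalisation), which forces $(\hat\bG+k\hat\bF)\tilde\Psi=\mu\tilde\Psi$. So $\bL$ should be a constant multiple of $\hat\bG+\ri C\,\hat\bF$, after $\vartheta_T=-k\vartheta'$ and $\uppsi_T=-k\uppsi'$ are replaced by momenta, and $e^\uppsi\uppsi_{XT}$ is eliminated using \eqref{ddef}. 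The compatibility of \eqref{LMlin} is the Lax equation $\bL'=[\bM,\bL]$. I will verify entrywise that this reduces to \eqref{hameq}, the $\la$-dependence separating into powers $\la^{-1},\la^0,\la$. Most of this is already guaranteed by the zero-curvature equation for $\hat\bF,\hat\bG$, so the only genuinely new check is the consistency of the substitution of $K$.

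\emph{Integrability.} Since $\bL$ is isospectral, $\det(\bL-\mu)$ is conserved for every $\la$. I will expand the characteristic polynomial in $\mu$ and $\la$ and extract the coefficients. I expect one coefficient to be (a multiple of) $h$, plus constants, and another combination, presumably in the $\mu^0\la^0$ term after subtracting a multiple of $h^2$, to give $J$ in \eqref{Jham}. As a safeguard, I will also check $\{h,J\}=0$ by direct computation of the canonical Poisson bracket. This bracket is a finite expression in $p_\psi,p_\varphi,e^{\pm\psi},\cosh\varphi,\sinh\varphi$, and it splits by degree in the momenta, with degree 3, 1 and 0 terms each cancelling separately. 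Functional independence follows because $h$ is quadratic and $J$ quartic in the momenta with leading term $p_\psi^2p_\varphi^2$, so $\rd h\wedge\rd J\neq0$ on an open dense set. Together with $\{h,J\}=0$, this gives Liouville integrability for two degrees of freedom.

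The main obstacle is the bookkeeping of the imaginary rescalings $\vartheta=\ri\varphi$ and $k=\ri C$, together with the precise form of $\bL$: the normalisation and the sign of $K$ entries must match \eqref{ddef}. If extracting $J$ from $\det(\bL-\mu)$ becomes unwieldy, I will rely on the direct bracket computation $\{h,J\}=0$, which is elementary though lengthy.
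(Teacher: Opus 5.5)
Your proposal is correct and is essentially the paper's proof. The paper likewise reduces the reciprocal Lax pair \eqref{FGhatlin} with $\hat\Psi=e^{\ri\mu T}\tilde\Psi(Z)$, which amounts to $\bM=\hat\bF$ and $\bL=-\ri(\hat\bG+\ri C\hat\bF)$ after substituting momenta and \eqref{ddef}; it then identifies $\bL'=[\bM,\bL]$ with Hamilton's equations and reads off $h$ and $J$ from $\la^4\det(\bL-\mu)$, where the $\mu^2$ coefficient contains $-C^2h\,\la^4$ and the $\mu^0\la^4$ coefficient is $\tfrac14C^4(h^2-J)$, as you anticipate.

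One caution for your entrywise check: your construction gives $\bL_{44}=-\tfrac12C(p_\psi+p_\varphi)$, which makes $\bL$ traceless, but the displayed matrix prints $+\tfrac12C(p_\psi+p_\varphi)$. With the printed sign, the $(1,3)$ and $(2,4)$ entries of $\bL'=[\bM,\bL]$ give the incompatible relations $\psi'=p_\psi$ and $\psi'=-p_\varphi$, so work with the matrix you derive.
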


\begin{proof}
The travelling wave reduction of the Lax pair \eqref{FGhatlin} is obtained by 
taking the wave function in the form $\hat{\Psi}=\exp(\ri\mu T) \tilde{\Psi}(Z)$, which leads to the pair of 
linear equations 
(\ref{LMlin}). 
The compatibility condition for the latter linear  system  is the matrix Lax equation 
$$ 
\bL'=[\bM,\bL], 
$$
which is equivalent to Hamilton's equations \eqref{hameq} for $\psi,\varphi$ and their conjugate momenta. 
Moreover, the spectral curve 
\beq\label{spec}
{\cal P}(\la^2,\mu^2) =\la^4 \det \big(\bL(\la)-\mu \mathbf{1}\big)=0
\eeq 
is invariant under the evolution defined by the Lax equation, 
being given by the polynomial 
$$ 
{\cal P} = \frac{1}{16}+\gamma_1\,\la^2 +\gamma_2\,\la^4 -C^4\la^6 
+\left(\frac{1}{2}\la^2 -\gamma_3\,\la^4\right)\mu^2 
+\la^4\mu^4, 
$$
where each of the coefficients $\gamma_1,\gamma_2,\gamma_3$ is a first integral of the system. After further calculations with computer algebra (MAPLE) we find that these coefficients are given explicitly by 
$$ 
\gamma_1=  \frac{1}{4}C^2(K^2-h), 
 \qquad \gamma_2 = \frac{1}{4}C^4(h^2-J), 
\qquad \gamma_3 = C^2 h, 
$$
where the function 
\eqref{Jham}
is in involution with $h$, that is 
$\{h,J\}=0 $ with respect to the canonical Poisson bracket, so $J$ is a second independent constant of motion, implying Liouville 
integrability of the system. 
\end{proof} 
\begin{remark}
If we let $\cal C$ denote the curve defined by the affine equation 
${\cal P}(\la^2,\mu^2)=0$ in the $(\la,\mu)$ plane, then this is an unramified double cover of the curve ${\cal C}'$ defined by  ${\cal P}(\ze,\mu^2)=0$ in the $(\ze,\mu)$ plane, which is in turn a ramified 
double cover of the elliptic curve $\cal E$ defined by 
${\cal P}(\ze,\eta)=0$, that is 
$$ 
{\cal E}: \qquad 
\ze^2\eta^2 -(\gamma_3\,\zeta -\tfrac{1}{2})\, \ze\eta
-C^4\ze^3+\gamma_2\,\ze^2+\gamma_1\,\ze+\tfrac{1}{16}=0, 
$$
so that we have a sequence of coverings: 
$$
{\cal C}\overset{2:1}{\longrightarrow}{\cal C}'
\overset{2:1}{\longrightarrow}{\cal E}.
$$
By applying the Riemann-Hurwitz
theorem to each covering in succession, we see that the genus of 
${\cal C}'$ is 3, being a double cover of the genus 1 curve 
$\cal E$ with 4 ramification points, while the genus of $\cal C$ is 5. General theory tells us that the flows of the Hamiltonian system 
linearize on  $\Jac {\cal C}$, the Jacobian variety of the spectral curve for the Lax matrix $\bL$. However, the latter is a 5-dimensional abelian variety, whereas the invariant tori defined by the level 
sets $h=\mathrm{const}$,  $J=\mathrm{const}$ are two-dimensional in this case (cf.\ Fig. \ref{figb}); so ideally we would like to construct a separation of variables that describes straight line motion on a two-dimensional 
subvariety of $\Jac {\cal C}$. There are various results on separating coordinates for natural Hamiltonians with two degrees of freedom having a second invariant that is quartic in momenta \cite{rrg}, including the stationary flow of the Hirota-Satsuma system \cite{BEF}. 
\end{remark}
\begin{figure}
\centering
\includegraphics[height=2.5in, width=4.2in] 
{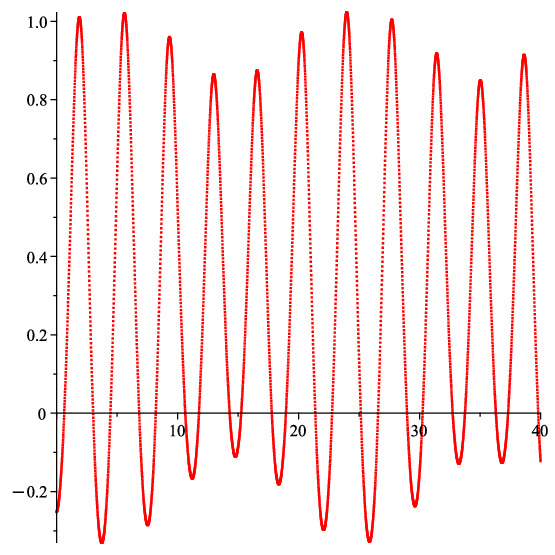}
\caption{Travelling wave profile of $\hat{\cal U}$ given by \eqref{kdvvar}, plotted against $Z$.}
\label{travwaveUt}
\end{figure}

The canonical coordinates and momenta $\psi,\varphi,p_\psi,p_\varphi$ and the parameter $C$ are convenient for interpreting the travelling wave reduction as a natural Hamiltonian system with two degrees of freedom, and showing that the system is Liouville integrable, as above. However, for \textit{real} travelling waves of 
\eqref{altRTsys}  
moving with real speed $k=\ri C\neq 0$, 
and corresponding real travelling waves of \eqref{2cpt} with speed $c=-kK$, 
it is preferable to use the coordinates $(\psi,\vartheta, \psi',\vartheta')\in\R^4$, where $\vartheta = \ri \varphi$ and the subscripts denote derivatives with respect to the travelling wave variable $z=X-kT$. In those coordinates, the equations of motion 
$$ 
\frac{\rd}{\rd Z}
\left(\begin{array}{c} \psi \\ \vartheta \\ \psi' \\
 \vartheta' \end{array}\right) = 
 \left(\begin{array}{c}
\psi' \\  \vartheta' \\-2k^{-1}e^{\psi}\sin\vartheta -Ke^{-\psi} -e^{-2\psi}\\  2k^{-1}e^{\psi}\cos\vartheta
\end{array}\right)
$$ 
have a fixed point when 
$$ 
\psi = \log Q, \quad  \vartheta = \pm \frac{\pi}{2}, \quad \psi'=0, \quad \vartheta'=0, 
$$
where $Q$ is a root of the cubic 
$$ 
\pm 2k^{-1}Q^3+KQ+1=0. 
$$
Then it is instructive to consider solutions of the 4D system close to an elliptic fixed point, which should correspond to compact 2D Liouville tori. In particular, when $k=1$, $K=-3$ (corresponding to waves in \eqref{2cpt} with speed $c=3$) 
and the plus sign is chosen, the cubic has a real root at $Q=1$, and  at the fixed point with $\psi=0$, $\vartheta=\pi/2$ the Jacobian matrix of the 4D system has eigenvalues $\pm\sqrt{2}\ri,\pm\sqrt{3}\ri$. A particular travelling wave solution, showing the profile of the potential $\hat{\cal U}$ as in \eqref{kdvvar}, the profile of  $u$ against $z(Z)$,  and a 3D projection of the corresponding Liouville torus, is plotted in Figs. \ref{travwaveUt}, \ref{figa} and \ref{figb}, respectively. In these plots, the initial conditions are 
$ \psi(0)=-0.3$, $\vartheta(0)=\frac{\pi}{2} -0.05$, 
$\psi'(0)=-0.05$, $\vartheta'(0)=0.15$ and $z(0)=0$, with the parameter values $k=1$, $K=-3$, so $c=3$ is the speed of the wave in the system \eqref{2cpt}.

\begin{figure}
\centering
\begin{minipage}{.5\textwidth}
  \centering
  \includegraphics[width=.8\linewidth]{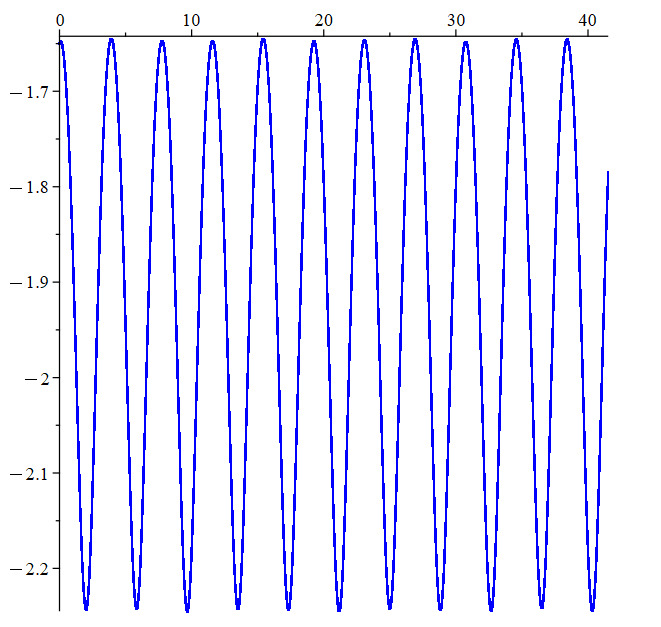}
  \captionof{figure}{
  Travelling wave profile of $u$ against $z$. 
  }
  \label{figa}
\end{minipage}%
\begin{minipage}{.5\textwidth}
  \centering
  \includegraphics[width=.8\linewidth]{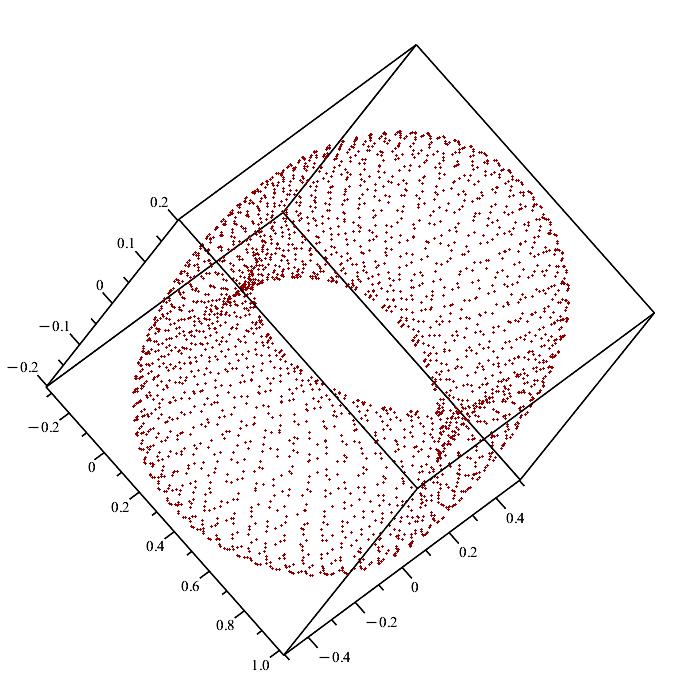}
  \captionof{figure}{ Liouville torus projected to
  $(\hat{\cal U},\psi',\vartheta')$.}
  \label{figb}
\end{minipage}
\end{figure}

\subsection{Connection with the Hirota--Satsuma system} \label{sec:Hirota-Satsuma}

As mentioned above, the vector Camassa--Holm system 
(\ref{vecsys}) has a third order symmetry, given by 
\eqref{3rdorderflow}. In the two-component case, when $d=2$, this symmetry is given by 
\beq\label{3rdnis2}
\begin{array}{rcl}
m_{1,\tau} & = & \tfrac{1}{2}(4\al D_x-D_x^3) \left(\frac{m_1}{(m_1^2+m_2^2)^\frac{3}{4}}\right) , 
\\
m_{2,\tau} & = & \tfrac{1}{2}(4\al D_x-D_x^3) \left(\frac{m_2}{(m_1^2+m_2^2)^\frac{3}{4}}\right) .
\end{array}
\eeq
(We have restored the arbitrary parameter $\al$ here.) 
The reciprocal transformation \eqref{RT} can be extended to this 
symmetry, 
and indeed, to the whole hierarchy of higher symmetries. 
In order to do this, one should 
take $$\rd X=\rho\, \rd x + \rho u\, \rd t +\si \rd \tau, \qquad  \rd T = \rd t, \qquad 
\rd \tau'=\rd \tau, 
$$  
where $\si$ is the flux of the density $\rho$ with respect to the 
$\tau$ flow, and $\tau'$ is the new time variable corresponding to 
the transformation of $\tau$ under the extension of \eqref{RT}.  We are then able to relate the transformed symmetry to a well-known coupled third order  system of PDEs, namely the Hirota--Satsuma system \cite{hirotasatsuma, hirotasatsuma2}
\beq\label{hs} 
\begin{array}{rcl}
{\cal U}_{\tau'} & = & \tfrac{1}{2}{\cal U}_{XXX}+3 \,{\cal U}{\cal U}_X - 6{\cal V}{\cal V}_X, \\ 
{\cal V}_{\tau'} & = & -{\cal V}_{XXX}-3 \,{\cal U}{\cal V}_X .
\end{array}
\eeq 
This is described by the following result, which is verified by a direct calculation. 
\begin{proposition}\label{hsmiura}
The extension of the reciprocal    transformation 
\eqref{RT} to the third order symmetry \eqref{3rdnis2} 
produces the coupled system 
\beq\label{3rdrhoth}
\begin{array}{rcl}
\uppsi_{\tau'} & = & \,\,\tfrac{1}{2}\uppsi_{XXX}-\tfrac{1}{4}\uppsi_X^3 
-\tfrac{3}{4}\vartheta_X^2\uppsi_X +\tfrac{3}{2}\vartheta_X\vartheta_{XX} 
-3\al e^{-2\uppsi}\uppsi_X, 
\\ 
\vartheta_{\tau'} & = & -\vartheta_{XXX}
+\tfrac{3}{2}\vartheta_X\uppsi_{XX}
+\tfrac{3}{4}\vartheta_X\uppsi_X^2 
+\tfrac{1}{4}\vartheta_X^3 
+3\al e^{-2\uppsi}\vartheta_X, 
\end{array}
\eeq 
which is related to the Hirota--Satsuma system (\ref{hs}) via the 
Miura map 
\beq\label{hsmiuramap}
{\cal U} =  \frac{1}{2}\uppsi_{XX}-\frac{1}{4}\uppsi_X^2
-\al e^{-2\uppsi}+2\sqrt{\al}e^{-\uppsi}\uppsi_X+\frac{1}{4}\vartheta_X^2, 
\qquad 
{\cal V} = \frac{\ri}{2}\big(\vartheta_{XX}-\vartheta_{X}\uppsi_X+2\sqrt{\al}e^{-\uppsi}\vartheta_{X}\big)
.
\eeq 
\end{proposition}
    
\begin{remark}
The system \eqref{3rdrhoth} admits another Miura map whose 
first component $\tilde{\cal U}$ looks very similar to ${\cal U}$,  being given by 
\beq\label{2ndmiura}
\tilde{\cal U} = -\frac{1}{2}\uppsi_{XX}-\frac{1}{4}\uppsi_X^2
-\al e^{-2\uppsi}-\frac{1}{4}\vartheta_X^2, \qquad 
\tilde{\cal V} =\vartheta_X^2 .
\eeq
The quantity $\tilde{\cal U}$ is related to the density  for the 
conservation law \eqref{consRT} by taking 
$\tilde{\cal U}=\hat{\cal U}-\tfrac{1}{2}\vartheta_X^2$ (where previously $\hat{\cal U}$ in \eqref{kdvvar} was written in the 
specific case $\al=\tfrac{1}{4}$). Upon applying the second Miura map  (\ref{2ndmiura}), we find a third order flow for  
 $\tilde{\cal U}$ and  $\tilde{\cal V}$ that contains powers of 
  $\tilde{\cal V}$ as denominators on the right-hand side, that is  
\beq\label{miura2sys}
\begin{array}{rclrl} 
\tilde{\cal U}_{\tau'} & =& \,\tfrac{1}{2}\,\tilde{\cal U}_{XXX}+\,3\,\tilde{\cal U}\tilde{\cal U}_X 
&+&\tfrac{3}{2}\tilde{\cal V}\tilde{\cal U}_X
-\tfrac{3}{8}\tilde{\cal V}^{-1}\tilde{\cal V}_X\tilde{\cal V}_{XX}
+\tfrac{3}{16}\tilde{\cal V}^{-2}\tilde{\cal V}_X^3
,
\\ 
\tilde{\cal V}_{\tau'} & =& 
-\tilde{\cal V}_{XXX}-\tfrac{3}{2}\tilde{\cal V}\tilde{\cal V}_X 
-3\tilde{\cal V}_X\tilde{\cal U}
&-&6\tilde{\cal V}\tilde{\cal U}_X
\,\,+\tfrac{3}{2}\tilde{\cal V}^{-1}\tilde{\cal V}_X\tilde{\cal V}_{XX}
\,\,-\tfrac{3}{4}\tilde{\cal V}^{-2}\tilde{\cal V}_X^3
. 
\end{array}
\eeq 
\end{remark}

\section{Two-component deformations of the Camassa--Holm equation} \label{sec:classification}

In this section we classify integrable deformations of the Camassa--Holm equation of the form

\beq \label{classify}
\begin{array}{rclcl}
(1-D_x^2)u_t & = &3uu_x-2u_xu_{xx}-uu_{xxx}+f(u,v,u_x,v_,u_{xx},v_{xx},u_{xxx},v_{xxx}) \\
v_{xt} & = & g(u,v,u_x,v_x,u_{xx},v_{xx},u_{xxx},v_{xxx}).
\end{array} 
\eeq 

In Theorem \ref{class2CH} below, we list all integrable systems (\ref{classify}) modulo rescaling transformations $u\to\alpha u,\,v\to\beta v,\,x\to\gamma x,\,t\to \delta t$ satisfying the following conditions:
 \begin{itemize}
 \item[I.] Functions $f,g$ are homogeneous quadratic polynomials in $u,v$ and their derivatives up to order 3;
 \item[II.] System possesses a reduction $v=0$;
 \item[III.] System (\ref{classify}) possesses a local symmetry of order 5, which upon imposing the reduction $v=0$ becomes of order $3$;
 \item[IV.] System (\ref{classify}) is non-triangular;
\item[V.] Upon shifting $u\to u+1$ the system gains a diagonal non-degenerate linear term.
 \end{itemize}

Condition II implies that upon reduction $v=0$ the resulting system is the scalar Camassa--Holm equation, so the system (\ref{classify}) is indeed a deformation.

As for condition III, we assume here as the definition of integrability the existence of an infinite hierarchy of local higher symmetries. We seek local symmetries in the form 
 \beq\label{sym}
 \begin{array}{rcl}
 u_\tau & = & F, \\ 
 v_\tau & = & G ,  
 \end{array}
 \eeq 
 where $F$ and $G$ are required to be local functions 
 of $u,v$, i.e. functions of $u,v$ and their derivatives up to a finite order. By definition system (\ref{sym}) is a generator of a symmetry of system (\ref{classify}) if
\begin{eqnarray*}
(1-D_x^2)D_t(F)&=&D_{\tau}\left(3uu_x-2u_xu_{xx}-uu_{xxx}+f(u,v,u_x,v_,u_{xx},v_{xx},u_{xxx},v_{xxx})\right),\\
D_xD_t(G)&=&D_{\tau}\left(g(u,v,u_x,v_x,u_{xx},v_{xx},u_{xxx},v_{xxx})\right),
\end{eqnarray*}
where $D_t,D_{\tau}$ are evolutionary derivations defined by (\ref{classify}) and (\ref{sym}). 
The maximal order of derivatives of $u$ or $v$ on which functions $F$ or $G$ depend is called the order of a symmetry. The scalar Camassa--Holm equation possesses local higher symmetries of orders congruent to $1\bmod 2$. For the class of systems being considered here, the existence of a symmetry of order 5 is necessary for the existence of infinitely many symmetries.

We shall call the system (\ref{classify}) triangular if $\frac{\partial f}{\partial v}=\frac{\partial f}{\partial v_x}=\frac{\partial f}{\partial v_{xx}}=\frac{\partial f}{\partial v_{xxx}}=0$ or $\frac{\partial g}{\partial v}=\frac{\partial g}{\partial v_x}=\frac{\partial g}{\partial v_{xx}}=\frac{\partial g}{\partial v_{xxx}}=0$, i.e. either of the equations depends only on its own variable. According to condition IV, we shall disregard triangular systems in the classification.

Condition V is most technical of the five requirements above. It means that, upon applying the shift $u \to u+1$, the system (\ref{classify}) becomes 
$$ 
D_t \, \left(\begin{array}{c} (1-D_x^2)u \\ v_x \end{array}\right) = 
\left(\begin{array}{cc} 3D_x -D_x^3 & 0 \\ 0 & {\cal L} \end{array} \right) \left(\begin{array}{c} u \\ v \end{array}\right) 
+ \left(\begin{array}{c} 3uu_x-2u_xu_{xx}-uu_{xxx}+f \\ g \end{array}\right), $$
where ${\cal L}$ is a linear differential operator with constant coefficients. This condition is required for the explicit derivation of integrability conditions in the framework of the symmetry approach, as described in \cite{HNW}, for instance. Here we omit all further  technical details 
of deriving these integrability conditions, which require extensive use of computer algebra, and proceed to state the final result of the derivation. 

The following classification theorem holds:

\begin{theorem} \label{class2CH}
Every system satisfying conditions I-V is equivalent modulo $u\to\alpha_1 u,\,v\to\alpha_2 v,\,x\to\alpha_3 x,\,t\to \alpha_4 t,\,\,\alpha_1,\ldots,\alpha_4\in\C^*$ to one of the list:
\begin{eqnarray}
\label{jc1u}
(1-D_x^2)u_t&=&3uu_x-2u_xu_{xx}-uu_{xxx}+\frac{1}{2}D_x(v^2),\\ \nonumber
v_{xt}&=&-(1-D_x^2)(uv),
\\ \nonumber \\
\label{jc2u}
(1-D_x^2)u_t&=&3uu_x-2u_xu_{xx}-uu_{xxx}+\frac{1}{2}D_x\left((1+D_x)v\right)^2,\\ \nonumber
v_{xt}&=&-(1-D_x)\left(u(1+D_x)v\right),\\ \nonumber \\
\label{mjc1u}
(1-D_x^2)u_t&=&3uu_x-2u_xu_{xx}-uu_{xxx}+\frac{1}{2}D_x\left((1-D_x^2)v\right)^2,\\ \nonumber
v_{xt}&=&-u(1-D_x^2)v,
\\ \nonumber \\
\label{mjc2u}
(1-D_x^2)u_t&=&3uu_x-2u_xu_{xx}-uu_{xxx}+\frac{1}{2}D_x(2vv_{xx}+v_x^2-v^2),\\ \nonumber
v_{xt}&=&vu_{xx}+u_xv_x+uv_{xx}-uv+2vv_{xx}+v_x^2,
\\ \nonumber \\
\label{CLZ1u}
(1-D_x^2)u_t&=&3uu_x-2u_xu_{xx}-uu_{xxx}+\frac{1}{2}D_x(v^2),\\ \nonumber
v_{xt}&=&(uv)_{xx},\\ \nonumber \\
\label{CLZ2u}
(1-D_x^2)u_t&=&3uu_x-2u_xu_{xx}-uu_{xxx}+\frac{1}{2}D_x(v_x^2),\\ \nonumber
v_{xt}&=&(uv_x)_{x}.
\end{eqnarray}
\end{theorem}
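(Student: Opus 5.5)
The proof is a finite computation in the symmetry approach, following \cite{HNW}. It has three parts: writing down the most general ansatz, deriving necessary integrability conditions that cut it down, and checking that each surviving system has a genuine symmetry.

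\emph{Ansatz.} By condition I, each of $f$ and $g$ is a linear combination, with unknown constant coefficients, of the quadratic monomials in $u,v,u_x,\dots,v_{xxx}$. Condition II says $f|_{v=0}=0$ and $g|_{v=0}=0$. So every monomial in $f$ contains a factor from $v$, and every monomial in $g$ does too, which makes $g$ bilinear in $(u\text{-jet},v\text{-jet})$ or quadratic in the $v$-jet.

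\emph{Necessary conditions.} Shift $u\to u+1$. By condition V the linear part then becomes $\mathrm{diag}(3D_x-D_x^3,\mathcal{L})$, with $\mathcal{L}$ a constant-coefficient operator. Diagonality of this linear part forces the terms of $f$ linear in the $v$-jet (after the shift) to cancel, which gives linear relations among the coefficients. Writing $\mathcal{L}$ in terms of the remaining coefficients, I will then:
\begin{itemize}
\item compute the dispersion relation;
\item apply the standard perturbative test: a symmetry of order $5$ whose $v=0$ reduction has order $3$ requires a linear symmetry $\mathrm{diag}(\omega_1(D_x),\omega_2(D_x))$ for which the quadratic homological equations are solvable, i.e.\ divisibility of the resonance polynomials $\omega_i(k_1+k_2)-\omega_j(k_1)-\omega_l(k_2)$ by the corresponding dispersion combinations;
\item use this to restrict $\mathcal{L}$ to finitely many admissible forms (essentially $\mathcal{L}\in\{D_x,\,D_x(1-D_x^2),\,D_x^3,\ldots\}$ up to scaling, matching $v_{xt}$ linear parts of the listed systems);
\item for each such $\mathcal{L}$, solve the second-order symmetry equations, i.e.\ require that the quadratic part of the symmetry can be constructed, which yields polynomial equations on the coefficients of $f,g$.
\end{itemize}
Condition IV then removes the triangular solutions, and the rescalings $u\to\alpha_1u,\dots,t\to\alpha_4t$ normalise the surviving free parameters. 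This should leave exactly (\ref{jc1u})--(\ref{CLZ2u}).

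\emph{Sufficiency.} For each listed system I exhibit a fifth-order local symmetry (\ref{sym}) and verify the defining identities by direct substitution with computer algebra. Several cases can be identified with known integrable systems:
\begin{itemize}
\item (\ref{CLZ1u}) is the 2CH system of \cite{or,CLZ,falqui}, and (\ref{CLZ2u}) is a Miura/potential form of it;
\item (\ref{jc1u}) should be \eqref{2cpt} in potential form, obtained by setting $m_2\propto v$-type variables so that $v_x=-2m_2$ becomes the second equation;
\item (\ref{jc2u}) and (\ref{mjc1u}) are related to these by the Miura-type substitutions $v\to(1+D_x)v$ and $v\to(1-D_x^2)v$, respectively.
\end{itemize}
For these, the symmetry comes from the hierarchy of Section~\ref{laxsym} (e.g.\ \eqref{3rdnis2}) or from the known 2CH hierarchy. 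The new system (\ref{mjc2u}) needs its fifth-order symmetry found and checked explicitly.

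The main obstacle is the necessity step: the polynomial system on the coefficients is large and must be split into cases by the form of $\mathcal{L}$. I would first fix $\mathcal{L}$ from the linear-level compatibility, and only then impose the quadratic conditions case by case. This keeps each resulting Gröbner-basis computation tractable, while making sure that branches where leading coefficients vanish are not lost.
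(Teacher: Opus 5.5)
Your overall route (quadratic ansatz cut down by II, the shift $u\to u+1$ to get a diagonal linear part, the perturbative symbolic test at quadratic order, then sufficiency by exhibiting symmetries) is exactly what the paper invokes. The paper itself gives nothing beyond an appeal to the symmetry approach of \cite{HNW} and computer algebra. However, the concrete prediction on which you propose to split cases is wrong. After the shift, $\mathcal{L}=\sum_j c_jD_x^j$, where $c_j$ is the coefficient of $u\,v^{(j)}$ in $g$; the terms $u^{(i)}v^{(j)}$ with $i\geq 1$ and the $vv$-terms contribute nothing linear. For the listed systems this gives $\mathcal{L}=D_x^2-1$ for \eqref{jc1u}--\eqref{mjc2u} (e.g.\ $-(1-D_x^2)\big((u+1)v\big)$ has linear part $-(1-D_x^2)v$), and $\mathcal{L}=D_x^2$ for \eqref{CLZ1u}, \eqref{CLZ2u}. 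None of your $D_x$, $D_x(1-D_x^2)$, $D_x^3$ has this form. In the normalisation $v_t=D_x^{-1}\mathcal{L}v$ the two operators become $D_x-D_x^{-1}$ and $D_x$, so at best the 2CH pair is captured. ``Up to scaling'' does not help either: the rescalings in the theorem must preserve $1-D_x^2$ and the CH terms, which forces $x\to\pm x$ and leaves $\mathcal{L}$ unchanged up to $D_x\to-D_x$. Fixing $\mathcal{L}$ to your list first would therefore lose \eqref{jc1u}--\eqref{mjc2u}, including the new system.

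The second issue is the plan to ``first fix $\mathcal{L}$ from the linear-level compatibility''. There is no constraint at the linear level, since any diagonal $\Omega$ commutes with the linear part. At quadratic order, write $G^{\Omega}=\Omega_i(\xi_1+\xi_2)-\Omega_j(\xi_1)-\Omega_l(\xi_2)$ and $G^{\omega}$ for the analogue built from the dispersion laws $\omega_1(\xi)=(3\xi-\xi^3)/(1-\xi^2)$ and $\omega_2(\xi)=\mathcal{L}(\xi)/\xi$, which are rational. The requirement is that $a\,G^{\Omega}/G^{\omega}$ be polynomial, with $a$ the quadratic coefficient of the equation; it is not that $G^{\omega}$ divide $G^{\Omega}$. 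The coefficients can absorb resonance factors, and \eqref{mjc2u} is exactly such a case. For $\omega_2(\xi)=\xi-\xi^{-1}$ one has $\omega_2(\xi_1+\xi_2)-\omega_2(\xi_1)-\omega_2(\xi_2)=(\xi_1^2+\xi_1\xi_2+\xi_2^2)/\big(\xi_1\xi_2(\xi_1+\xi_2)\big)$, and the $vv$-part $2vv_{xx}+v_x^2$ of $g$ has symbol precisely $\xi_1^2+\xi_1\xi_2+\xi_2^2$. By contrast, for the linear part $\Omega_2\propto\xi-\xi^3$ carried (after the shift) by the $v$-component of \eqref{3rdnis2}, $G^{\Omega}\propto\xi_1\xi_2(\xi_1+\xi_2)$ is not divisible by that factor. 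A divisibility test on resonance functions alone would thus forbid $vv$-terms in $g$ and miss the new system. $\mathcal{L}$, the linear part of the symmetry and the quadratic coefficients must be solved for jointly.

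Finally, in the sufficiency step, symmetries of \eqref{jc1u} give \emph{local} symmetries of \eqref{jc2u} and \eqref{mjc1u} only because of a divisibility property. The induced flow $v_\tau=-(4\alpha-D_x^2)\big(m_2(m_1^2+m_2^2)^{-3/4}\big)$ is, at $\alpha=\frac14$ (the normalisation of \eqref{jc1u}), divisible by $1+D_x$ and by $1-D_x^2$. Likewise, \eqref{CLZ2u} needs the $v$-component of the 2CH symmetry to be a total $x$-derivative. You should check these rather than assume them.
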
 

\begin{remark} It is worthwhile to comment in turn  on each item in the above list. 
\begin{itemize}
\item The first system, namely \eqref{jc1u}, is equivalent 
(up to rescaling) to the $d=2$ case of the vector Camassa--Holm system \eqref{vecsys}.

\item System (\ref{jc2u}) is related  to 
(\ref{jc1u}) via the Miura transformation $v\to (1+D_x)v$.

\item System (\ref{mjc1u}) 
is related  to (\ref{jc1u}) via the
Miura transformation $v\to (1-D_x^2)v$. 

\item System (\ref{CLZ1u}) is the 2CH system studied in \cite{CLZ,falqui,LZ}  
(see also equation (43) in \cite{or}). 

\item System (\ref{CLZ2u}) is the potential version of the latter one, being related to  (\ref{CLZ1u}) via 
$v\to v_x$. 

\item The system \eqref{mjc2u} appears to be new.
\end{itemize} 
 
\end{remark}

Upon rescaling $x\to2\alpha^{\frac{1}{2}}x,\,\,u\to2\alpha^{\frac{1}{2}}u$ and
introduction of the variables $m_1=\frac{1}{2}u_{xx}-2\alpha u,\,\,m_{2}=-\frac{1}{2}v_x$ the systems (\ref{jc1u})-(\ref{CLZ2u}) can be rewritten as

\begin{eqnarray}
 \label{jc1}
 m_{1,t}&=&2u_xm_1+um_{1,x}+vm_2,\\ \nonumber
 m_{2,t}&=&2u_xm_2+um_{2,x}-vm_1,\\ \nonumber\\ 
 \label{jc2}
 m_{1,t}&=&2u_xm_1+um_{1,x}+\frac{1}{2\alpha}(2\sqrt{\alpha}m_2+m_{2,x})(\sqrt{\alpha}v-m_2),\\ \nonumber
 m_{2,t}&=&(um_2)_x+\sqrt{\alpha}v(2\sqrt{\alpha} u-u_x),\\ \nonumber\\
 \label{mjc1}
 m_{1,t}&=&2u_xm_1+um_{1,x}+\frac{1}{8\alpha^2}(4\alpha m_2-m_{2,xx})(m_{2,x}+2\alpha v),\\ \nonumber
 m_{2,t}&=&u(m_{2,x}+2\alpha v),
 \\ \nonumber\\
 \label{mjc2}
 m_{1,t}&=&2u_xm_1+um_{1,x}-\alpha^{-1}m_2m_{2,x}-\frac{1}{4\alpha}v(4\alpha m_{2}-m_{2,xx}),\\ \nonumber
 m_{2,t}&=&(um_2)_x-m_1v+\alpha^{-\frac{1}{2}}(vm_{2,x}-m_2^2),
 \\ \nonumber\\
\label{CLZ1}
m_{1,t}&=&2u_xm_1+um_{1,x}+vm_2,\\ \nonumber
 m_{2,t}&=&2u_xm_2+um_{2,x}-vm_1-2\alpha uv,
 \\ \nonumber\\
\label{CLZ2}
m_{1,t}&=&2m_1u_x+um_{1,x}-\frac{1}{2\alpha}m_2m_{2,x},\\ \nonumber
 m_{2,t}&=&(um_2)_{x},
 \end{eqnarray}

With the variables $m_1,m_2$ as above, 
\eqref{jc1} is just the $d=2$ case of the vector Camassa--Holm system \eqref{vecsys}. The systems \eqref{jc2}
and  \eqref{mjc1} can also be extended to vector analogues, related to \eqref{vecsys} via Miura transformations. 

\begin{example}
In the limit $\alpha=0$, the system \eqref{jc1} becomes 
\begin{equation}\label{hs2a}
\begin{array}{rcl} 
u_{xxt} & = & (uu_{xx})_x + 
\frac{1}{2}(u_x^2)_x -\frac{1}{2} (v^2)_x, 
\\ 
v_{xt} & = & (uv)_{xx}, 
\end{array}
\end{equation}
which we refer to as the Hunter-Saxton limit with two components, or HS2.
Upon integrating both equations with vanishing boundary conditions at infinity, this gives 
\begin{equation}\label{hs2b}
\begin{array}{rcl} 
u_{xt} & = & uu_{xx} + 
\frac{1}{2} u_x^2 -\frac{1}{2} v^2, 
\\ 
v_{t} & = & (uv)_{x}. 
\end{array}
\end{equation} 
 In the remainder of the paper, we study the weak solutions for the multi-component systems of Hunter-Saxton type.  
\end{example}

\begin{remark} \label{rem:simanddif}

We want to comment on the similarities and differences between 
 \eqref{jc1} and \eqref{CLZ1}.  
 Our first observation is that \eqref{jc1} and \eqref{CLZ1} share the same $\alpha=0$ limit.  In other words, the Hunter-Saxton limits for these two 
equations coincide.  Yet the natural boundary conditions are very different in the two cases.  We will elaborate on this in the next section.   However, it is perhaps even more pressing to ask whether these two equations for $\alpha\neq 0$ are substantially different. We will address this question for the 
class of weak solutions, focusing on solutions $m_1, m_2$ with singular support only (peakons), consisting of sums of Dirac deltas supported on moving points 
$\{x_j(t)\}$.  In this case, $v$ is a primitive of $m_2$ and must be piecewise constant, while $u$ is a continuous, piecewise smooth function involving weighted sums of $e^{-2\abs{x-x_j}}$.  
It is then easy to check that the left-hand sides involve a distribution of order $1$ with singular support only (thus containing Dirac deltas and distributional derivatives of the Dirac deltas), so the right-hand sides in both equations must be of the same type.  
However, the last term in \eqref{CLZ1}, the one involving $\alpha$,  is a piecewise smooth function.  Thus, there are no non-trivial solutions of this type for \eqref{CLZ1}, but, by contrast, these solutions are supported by \eqref{jc1}.  In \cite{CI}, the system \eqref{hs2b} was considered as the Hunter-Saxton  limit of \eqref{CLZ1}, with vanishing boundary conditions. 

There is another important difference between 
 \eqref{jc1} and \eqref{CLZ1}, which is in the distinct nature of their symmetry hierarchies. All the systems in the list have a trivial symmetry of order 1, corresponding to translation in $x$. The lowest non-trivial symmetry of the system  \eqref{jc1}  is of order 3 in derivatives of $m_1$ and $m_2$, whereas \eqref{CLZ1} (being of nonlinear Schr\"odinger type) has a symmetry of order 2. 
\end{remark}

\section{ The Neumann beam problem and Hunter-Saxton limit} \label{sec:NB/HS}
\setcounter{equation}{0}

In this section we construct a specific class of weak solutions of the generalized beam problem. 
For \eqref{vecsys}, we consider the Clifford-valued Hunter–Saxton limit ($\alpha=0$) in Euclidean signature.  
Let $(\cdot,\cdot)$ be a nondegenerate symmetric bilinear form on $W$ with Euclidean signature, and decompose
\[
W = V \oplus V^\perp, \qquad V = \langle e_\mu\rangle_{2\le \mu\le d}, 
\quad V^\perp = \langle e_1\rangle.
\]
Denote by $P_V$ the orthogonal projection onto $V$.

If we rewrite \eqref{vecsys} in the form \eqref{mtcliff} as in section 2, 
with $b_0=u$ and $b_{-1}=e_1$, 
then the measure $M$ evolves according to
\begin{equation} \label{eq:Mtgen_compact}
M_t = \mathcal{L}_M u + (M,\bv)e_1 - (M,e_1)\bv, \qquad \bv\in V,
\end{equation}
with the operator 
$$ 
\mathcal{L}_M =MD_x+D_xM 
$$
(which acts the same as $ \tfrac{1}{2}(A_M D_x + D_x A_M) $ on the scalar field $u\in\F$), 
subject to
\begin{equation} \label{eq:uvconstraints_compact}
u_{xx}=2(M,e_1), \qquad \bv_x=-2P_V M.
\end{equation}
The right-hand side of \eqref{eq:Mtgen_compact} splits into the component parallel to $M$ and a perpendicular rotation term.

Eliminating $M$ yields the closed system
\begin{align}
u_{xxt} &= (uu_{xx})_x + \tfrac12 (u_x^2)_x - \tfrac12 (\bv,\bv)_x, \label{eq:uxxt_compact}\\
\bv_{xt} &= (u\bv)_{xx}, \label{eq:vxt_compact}
\end{align}
which corresponds to \eqref{hs2a} when $d=2$, 
or, after one integration of both equations, we have 
\begin{align}
u_{xt} &= uu_{xx} + \tfrac12 u_x^2 - \tfrac12 (\bv,\bv), \\
\bv_t &= (u\bv)_x, 
\end{align}
which corresponds to \eqref{hs2b} when $d=2$. 
The reduction $\bv=0$ recovers the scalar Hunter--Saxton equation \cite{HS, HZ}.  In the rest of the paper, we work with the unintegrated system, given by \eqref{eq:uxxt_compact} and \eqref{eq:vxt_compact}. 
The system is $O(V)$–invariant: $u$ is fixed and $\bv\mapsto g\,\bv$ for $g\in O(V)$.


We represent the fields in terms of $M$ by
\begin{align}
u(x) &= \int_\mathbb{R} |x-y|(M(y),e_1)\,\rd y, \label{eq:uintf} \\
\bv(x) &= -P_V \int_\mathbb{R} \operatorname{sgn}(x-y)M(y)\,\rd y,
\end{align}
which is natural for compactly supported measures. (For convenience, the implicit dependence on time $t$ is omitted here.)

\begin{lemma}[Asymptotics and mass conservation] \label{lem:uvass} 
Let $\mathcal M=\int M(y,t)\, \rd y$ and $\mathcal M_1=\int yM(y,t)\,\rd y$.  
Then outside the support, the fields behave as 
\begin{align}
u_+(x) &=  (\mathcal M,e_1)x-(\mathcal M_1,e_1), \qquad u_-=-u_+, \label{eq:up}\\
\bv_\pm &= \pm P_V\mathcal M, \label{eq:vpm}
\end{align}
where $\pm$ refers to the right $(+)$, left $(-)$, asymptotic regions, respectively. Moreover, we have conservation of mass, in the sense that   
$$ 
\frac{d}{dt}\mathcal M=0.
$$
\end{lemma}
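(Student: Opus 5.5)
The plan is to handle the two claims separately: the asymptotic formulas are a direct evaluation of the integral representations, and mass conservation comes from integrating the evolution equation \eqref{eq:Mtgen_compact} over $\R$ and reducing every term to a boundary term.

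\textbf{Asymptotics.} Take $x$ to the right of the (compact) support of $M$.

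For $u$, $|x-y|=x-y$ for every $y$ in the support, so \eqref{eq:uintf} gives
$$u_+(x)=x\int (M,e_1)\,\rd y-\int y\,(M,e_1)\,\rd y=(\mathcal M,e_1)x-(\mathcal M_1,e_1).$$
To the left, $|x-y|=y-x$, which gives $u_-=-u_+$.

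For $\bv$, $\operatorname{sgn}(x-y)$ is constant, equal to $+1$ on the right and $-1$ on the left. So $\bv$ is constant on each side and equals a multiple $\pm P_V\mathcal M$, as in \eqref{eq:vpm}. The overall sign depends on the sign convention in the definition of $\bv$, and I will check it against the constraint $\bv_x=-2P_VM$. For the argument below only two facts matter: $\bv_+=-\bv_-$, and $(\bv_+,\bv_+)=(\bv_-,\bv_-)$.

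I will also record that $u_x=\pm(\mathcal M,e_1)$ outside the support.

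\textbf{Mass conservation.} Integrate \eqref{eq:Mtgen_compact} in $x$, using compact support to discard total derivatives. Since $\mathcal L_M u=Mu_x+(Mu)_x$, we have
$$\frac{d}{dt}\mathcal M=\int Mu_x\,\rd x+\Big(\int (M,\bv)\,\rd x\Big)e_1-\int (M,e_1)\bv\,\rd x.$$
Split $M=(M,e_1)e_1+P_VM$ and substitute $(M,e_1)=\tfrac12 u_{xx}$ and $P_VM=-\tfrac12\bv_x$ from \eqref{eq:uvconstraints_compact}.

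The $e_1$-component of $\int Mu_x$ becomes $\tfrac14\big[u_x^2\big]_{-\infty}^{+\infty}$. This vanishes because $u_x^2=(\mathcal M,e_1)^2$ on both sides.

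The term $\int(M,\bv)$ becomes $-\tfrac14\big[(\bv,\bv)\big]$, which vanishes by $|\bv_+|=|\bv_-|$.

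The $V$-components combine into
$$-\tfrac12\int(\bv_xu_x+u_{xx}\bv)\,\rd x=-\tfrac12\big[u_x\bv\big]_{-\infty}^{+\infty}.$$
This vanishes because $u_x$ and $\bv$ both flip sign between the two asymptotic regions, so their product takes the same value at $\pm\infty$. Hence $\tfrac{d}{dt}\mathcal M=0$.

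\textbf{Main obstacle.} The delicate step is justifying these manipulations when $M$ is a genuine measure, for instance sums of Dirac deltas. There $u_x$ and $\bv$ jump exactly at the atoms, so products such as $Mu_x$ and $(M,e_1)\bv$ need a convention. I would adopt the standard one of taking averages of left and right limits at each atom, which is the convention that makes \eqref{eq:Mtgen_compact} a meaningful weak formulation. Under that convention, the identities $\int u_{xx}u_x=\tfrac12[u_x^2]$ and $\int(\bv_xu_x+u_{xx}\bv)=[u_x\bv]$ continue to hold for piecewise linear $u$ and piecewise constant $\bv$. The reason is that at a jump, the average value times the jump equals half the jump of the square, or the jump of the product. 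I would verify this at a single atom and then sum. Alternatively, one could mollify $M$, apply the smooth computation, and pass to the limit.
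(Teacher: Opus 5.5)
Your argument is correct and is essentially the paper's. The paper integrates the closed system \eqref{eq:uxxt_compact}--\eqref{eq:vxt_compact} over $\mathbb{R}$, which is exactly your integration of \eqref{eq:Mtgen_compact} with the constraints substituted, and obtains the same boundary terms $\tfrac12\big[u_x^2-(\bv,\bv)\big]$ and $\big[u_x\bv\big]$, which vanish for the reasons you give; your averaging check at the atoms makes explicit what the paper's one-line proof leaves implicit.

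When you carry out the deferred sign check, $\bv=-P_V\int\operatorname{sgn}(x-y)M(y)\,\rd y$ (equivalently $\bv_+-\bv_-=-2P_V\mathcal M$) gives $\bv_\pm=\mp P_V\mathcal M$. This is the opposite of the printed \eqref{eq:vpm}, but it is the sign the paper actually uses later in \eqref{eq:omegadef}; since you only use $\bv_+=-\bv_-$, your proof is unaffected.
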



\begin{proof}
Integrating \eqref{eq:uxxt_compact} and \eqref{eq:vxt_compact} over $\mathbb R$ and using the asymptotics shows that all boundary contributions vanish.
\end{proof}


\subsection{Time evolution and the Weyl function}

For \eqref{Phixx} with $\al=0$, we consider the Neumann string problem in $\Cl(W)$:

\begin{definition}\label{def:genN} 
For compactly supported $M$, the Neumann problem is
\begin{equation}
\Psi_{xx}=\lambda M\Psi,
\end{equation}
with $\Psi_x$ non-invertible at both spatial infinities.
\end{definition}

We specialize to a discrete measure
\begin{equation} \label{eq:fM}
M=\sum_{j=1}^N M_j\delta_{x_j}, \qquad x_1<\cdots<x_N,
\end{equation}
with invertible $M_j\in \Cl_1$.
Then \eqref{eq:Mtgen_compact} reduces to
\begin{align}
\dot x_j &= -u(x_j), \label{eq:dotxj}\\
\dot M_j &= \langle u_x\rangle(x_j)M_j
+\tfrac12[e_1\langle \bv\rangle(x_j),M_j]. \label{eq:dotMj}
\end{align}
Invertibility of each $M_j$ for all $t$, provided that $M_j(0)$ is invertible, and conservation of $\mathcal M=\sum_j M_j$ follow directly from the above ODE system.


To incorporate time deformation into the spectral problem, we use \eqref{Phit} in the form 
\begin{equation}
\Psi_t = \tfrac12(e_1\bv-u_x)\Psi + \left(u+\frac{e_1}{\lambda}\right)\Psi_x.
\end{equation}
Let $\Phi$ be the solution normalized by $\Phi=1$ for $x\ll0$.  
A time-dependent gauge $\Omega(t,\lambda)$ is needed to ensure compatibility. Thus, setting 
\[
\Psi=\Phi\Omega, 
\]
and analyzing the compatibility condition in the left asymptotic region, one obtains
\begin{equation} \label{eq:omegadef}
\Omega_t\Omega^{-1}
=\tfrac12(e_1P_V\mathcal M + (\mathcal M,e_1)),
\end{equation} 
which is constant in $t$.
In the right asymptotic region, writing
\beq\label{rightasy}
\Phi e_1=A(x-x_N)+B,
\eeq 
the coefficients evolve as follows.

\begin{proposition}
Let $F=\frac12 e_1P_V\mathcal M$. Then $A,B$ in \eqref{rightasy} satisfy 
\begin{align}
A_t+[F,A]&=0, \label{eq:At} \\
B_t+(\mathcal M,e_1)B+[F,B]&=\frac{e_1}{\lambda}A. \label{eq:Bt} 
\end{align}
\end{proposition}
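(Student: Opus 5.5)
The plan is to read off the time evolution of $\Phi$ in the right asymptotic region $x>x_N$ directly from the gauged Lax equation, using the explicit form of $u,\bv$ there given by Lemma \ref{lem:uvass}. Throughout I write $c=(\mathcal M,e_1)$. Since $\Psi=\Phi\Omega$ satisfies the time equation $\Psi_t=\tfrac12(e_1\bv-u_x)\Psi+(u+e_1/\lambda)\Psi_x$, multiplying on the right by $\Omega^{-1}$ and using \eqref{eq:omegadef} gives
\begin{equation*}
\Phi_t=\tfrac12(e_1\bv-u_x)\Phi+\Big(u+\frac{e_1}{\lambda}\Big)\Phi_x-\Phi\,\Omega_t\Omega^{-1}.
\end{equation*}
For $x>x_N$ the measure vanishes, and Lemma \ref{lem:uvass} gives $\bv=\bv_+=P_V\mathcal M$, $u_x=c$ and $u=u_+(x_N)+c(x-x_N)$. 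Hence the first coefficient is the constant $\tfrac12 e_1P_V\mathcal M-\tfrac12 c=F-\tfrac c2$, and $\Omega_t\Omega^{-1}=F+\tfrac c2$. The two $\tfrac c2$ terms combine, leaving
\begin{equation*}
\Phi_t=[F,\Phi]-c\,\Phi+\Big(u+\frac{e_1}{\lambda}\Big)\Phi_x .
\end{equation*}

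Next I substitute the right asymptotic form \eqref{rightasy}. I multiply the evolution equation on the right by $e_1$, so that $\Phi e_1=A(x-x_N)+B$ and $\Phi_x e_1=A$. The time derivative of the left-hand side is
\begin{equation*}
(\Phi e_1)_t=A_t(x-x_N)-A\dot x_N+B_t=A_t(x-x_N)+u_+(x_N)\,A+B_t,
\end{equation*}
by \eqref{eq:dotxj}, since $\dot x_N=-u(x_N)$ and $u$ is continuous at $x_N$. On the right-hand side, the term $(u+e_1/\lambda)\Phi_xe_1$ equals $\big(u_+(x_N)+c(x-x_N)\big)A+\lambda^{-1}e_1A$, and $-c\,\Phi e_1=-cA(x-x_N)-cB$.

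Comparing coefficients of $(x-x_N)$ and of the constant term then gives both statements. In the linear coefficient the terms $\mp cA$ cancel, which leaves \eqref{eq:At}. In the constant coefficient the terms $u_+(x_N)A$ cancel, which leaves \eqref{eq:Bt}, with $-cB$ and the source term $\lambda^{-1}e_1A$.

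The one delicate step, which I expect to be the main obstacle, is transferring the commutator $[F,\Phi]$ through the right multiplication by $e_1$. Because $F=\tfrac12 e_1 w$ with $w\in V$, the element $F$ anticommutes with $e_1$, so $[F,\Phi]e_1$ is not literally $[F,\Phi e_1]$. I would handle this first by splitting $\Phi$ into parity components. From the transfer-matrix construction across the point masses, $\Phi$ is a polynomial in $\lambda$ whose terms are products of the $M_j\in\Cl_1$ and scalar factors, so the parity of each term is fixed by its power of $\lambda$. I would then check carefully how this parity structure, together with the sign convention in \eqref{eq:omegadef} fixed by the left asymptotic region, makes the commutator come out as $[F,A]$ and $[F,B]$, as stated in \eqref{eq:At} and \eqref{eq:Bt}. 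If needed, I would recheck the sign in \eqref{eq:omegadef} by imposing $\Phi_t=0$ for $x\ll0$, where $\bv_-=-P_V\mathcal M$ and $u_x=-c$. The remaining steps are routine coefficient matching.
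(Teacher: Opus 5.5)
There is a genuine gap, and it lies exactly at the step you flagged. It is caused by a sign error, not by a subtlety that a parity argument can resolve.

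\textbf{Where it goes wrong.} You take $\bv=\bv_+=P_V\mathcal M$ for $x>x_N$, as \eqref{eq:vpm} literally prints. That sign contradicts the constraint $\bv_x=-2P_VM$ and the representation $\bv(x)=-P_V\int\sgn(x-y)M(y)\,\rd y$. These give $\bv_+=-P_V\mathcal M$ and $\bv_-=+P_V\mathcal M$. It is precisely $\bv_-=+P_V\mathcal M$ (together with $u_x=-c$) that produces \eqref{eq:omegadef} from $\Phi\equiv 1$ for $x\ll 0$.

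Mixing the printed $\bv_+$ with \eqref{eq:omegadef} gives your equation $\Phi_t=[F,\Phi]-c\Phi+(u+e_1/\lambda)\Phi_x$, and this equation is false. For $x>x_N$ we have $\Phi=1+O(\lambda)$ and $\lambda^{-1}e_1\Phi_x=e_1\mathcal M+O(\lambda)=c+2F+O(\lambda)$. So the $\lambda^0$ coefficient of your equation reads $0=2F$, which forces $P_V\mathcal M=0$.

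\textbf{Why your proposed repairs fail.} No parity splitting of $\Phi$ can turn this equation into the statement. Right multiplication by $e_1$ converts $[F,\Phi]$ into the anticommutator $F(\Phi e_1)+(\Phi e_1)F$. Degree parity cannot fix a bivector's commutation behaviour: $e_1e_2$ anticommutes with $e_1$ but commutes with $e_3$. Your fallback of re-deriving \eqref{eq:omegadef} with $\bv_-=-P_V\mathcal M$ fails too. It gives $\Omega_t\Omega^{-1}=-F+\tfrac c2$, which yields the stated equations with $F$ replaced by $-F$, and the $\lambda^0$ check then reads $0=4F$.

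\textbf{The fix.} Use the correct value $\bv_+=-P_V\mathcal M$. Then $\tfrac12(e_1\bv_+-u_x)=-F-\tfrac c2=-\Omega_t\Omega^{-1}$. Writing $\omega=F+\tfrac c2$, the gauged equation in the right region becomes
\[
\Phi_t+\omega\Phi+\Phi\omega=\Big(u+\frac{e_1}{\lambda}\Big)\Phi_x,\qquad\text{i.e.}\qquad \Phi_t=-(F\Phi+\Phi F)-c\,\Phi+\Big(u+\frac{e_1}{\lambda}\Big)\Phi_x .
\]
This contains an anticommutator with $F$, not a commutator. Right-multiply by $e_1$ and use $Fe_1=-e_1F$:
\[
(F\Phi+\Phi F)e_1=F(\Phi e_1)-(\Phi e_1)F=[F,\Phi e_1].
\]
So the anticommutation of $F$ with $e_1$ is exactly what produces the commutators in \eqref{eq:At} and \eqref{eq:Bt}; no parity decomposition is needed. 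You then have $(\Phi e_1)_t+c\,\Phi e_1+[F,\Phi e_1]=(u_++e_1/\lambda)A$. From there your coefficient matching goes through unchanged: the $\mp cA$ cancellation, $\dot x_N=-u_+(x_N)$, and the cancellation of $u_+(x_N)A$. This is the paper's argument.
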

\begin{proof} 
Using \eqref{eq:omegadef} and 
\autoref{lem:uvass},  we obtain
\begin{equation}\label{eq:Phit}
\dot{\Phi} +\Phi\omega+\omega\Phi =(\overbrace{(\mathcal{M},e_1)x-(\mathcal{M}_1, e_1)}^{u_+}+\frac{e_1}\lambda)\Phi_x
\end{equation}
where $\omega=\frac{e_1 P_V\mathcal{M}+(\mathcal{M}, e_1)}{2}$.  
The term $\Phi\omega+\omega \Phi$ can be easily simplified to
\begin{equation*}
\Phi\omega+\omega \Phi=(\mathcal{M}, e_1) \Phi + \Phi \frac{e_1 P_V \mathcal{M}}{2} +\frac{e_1 P_V \mathcal{M}}{2}\Phi.  
\end{equation*}
We now observe that for any $\bv\in V$, $e_1\bv+\bv e_1=0$, which results in the formula 
\begin{equation*}
\Phi\omega e_1+\omega \Phi e_1=(\mathcal{M}, e_1) \Phi e_1 + \frac{1}{2}\left[e_1 P_V \mathcal{M},\Phi e_1\right].  
\end{equation*}
Finally, multiply \eqref{eq:Phit} by $e_1$ on the right and write $\Phi e_1=A(x-x_N)+B$, recalling that $\dot x_N=-u_+(x_N)$, to get \eqref{eq:At} and \eqref{eq:Bt}.  
\end{proof} 
\begin{corollary}\label{cor:ABintegrated} 
The time dependence of $A$ and $B$ is 
\begin{align}
A(t)&=e^{-Ft}A(0)e^{Ft}, \\
B(t)&=e^{-(\mathcal M,e_1)t}e^{-Ft}B(0)e^{Ft}
+\frac{e_1}{\lambda(\mathcal M,e_1)}(1-e^{-(\mathcal M,e_1)t})A(t),
\end{align}
from which it follows that the Weyl function
\begin{equation} \label{eq:W} 
W(\lambda)
=B A^{-1}
\end{equation} 
has time evolution given explicitly by
\begin{equation}  \label{eq:Weyl evol}
W(\lambda,t)
=e^{(\mathcal M,e_1)t}e^{-Ft}W(\lambda,0)e^{Ft}
+\frac{e_1}{\lambda(\mathcal M,e_1)}(1-e^{-(\mathcal M,e_1)t}).
\end{equation}
\end{corollary}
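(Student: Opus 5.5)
The plan is to integrate the two linear ODEs \eqref{eq:At} and \eqref{eq:Bt} directly, using that $F=\tfrac12 e_1P_V\mathcal M$ and $(\mathcal M,e_1)$ are independent of $t$. This follows from the conservation of mass in Lemma \ref{lem:uvass}. Throughout, $\lambda$ is a fixed parameter.

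\textbf{The coefficient $A$.} Equation \eqref{eq:At} reads $A_t=[A,F]$, so I try $A(t)=e^{-Ft}A(0)e^{Ft}$. Differentiating gives $-FA+AF=[A,F]$, as required. Uniqueness for linear ODEs with constant coefficients in the finite-dimensional algebra $\Cl(W)$ then identifies this as the solution.

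\textbf{The coefficient $B$.} Put $\kappa=(\mathcal M,e_1)$ and define the conjugated variable $\tilde B=e^{\kappa t}e^{Ft}Be^{-Ft}$. Equation \eqref{eq:Bt} then becomes
\[
\tilde B_t=\frac{1}{\lambda}e^{\kappa t}e^{Ft}e_1Ae^{-Ft}.
\]
The key observation is that $e_1$ commutes with $F$. Indeed, $F=\tfrac12 e_1\mathbf{w}$ with $\mathbf{w}=P_V\mathcal M\in V$, and since $e_1$ anticommutes with $\mathbf{w}$,
\[
e_1(e_1\mathbf{w})=\mathbf{w}=-(e_1\mathbf{w})e_1\cdot(-1)=(e_1\mathbf{w})e_1 .
\]
Hence $e^{Ft}e_1Ae^{-Ft}=e_1A(0)$. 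Integrating,
\[
\tilde B(t)=B(0)+\frac{e^{\kappa t}-1}{\lambda\kappa}\,e_1A(0),
\]
and undoing the conjugation gives the stated formula. The last term uses $e^{-Ft}e_1A(0)e^{Ft}=e_1A(t)$.

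\textbf{The Weyl function.} I form $W=BA^{-1}$ using $A(t)^{-1}=e^{-Ft}A(0)^{-1}e^{Ft}$. This gives
\[
W(t)=e^{-\kappa t}e^{-Ft}W(0)e^{Ft}+\frac{e_1}{\lambda\kappa}\bigl(1-e^{-\kappa t}\bigr).
\]
For this step I need $A$ to be invertible. It is invertible at $t=0$ for generic $\lambda$, and conjugation preserves invertibility.

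\textbf{The main obstacle.} The one real check is the sign in the exponent of the first term. My derivation gives $e^{-\kappa t}$, whereas \eqref{eq:Weyl evol} as stated has $e^{+\kappa t}$. I would recheck the sign convention in \eqref{eq:Bt}, and in the definition of $\omega$ in \eqref{eq:omegadef}, before asserting the displayed exponent. It is possible that the intended expression uses $e^{-\kappa t}$, or that $\kappa$ enters with the opposite sign via $u_\pm$.

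The remaining step, that $e_1$ commutes with $F$, is the algebraic point that makes the inhomogeneous term integrable in closed form. It rests only on the Clifford relation $e_1e_\mu=-e_\mu e_1$ for $\mu\ge2$.
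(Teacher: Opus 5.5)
Your treatment of $A$ is correct, and you are right about the exponent. The $B$-formula in the statement itself gives $W=BA^{-1}=e^{-\kappa t}e^{-Ft}W(0)e^{Ft}+\cdots$, so the $e^{+\kappa t}$ in \eqref{eq:Weyl evol} is a misprint. The sign of $\kappa$ in \eqref{eq:Bt} comes from $u_x=\pm\kappa$ in the two asymptotic regions, so it cannot flip.

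The genuine problem is your ``key observation'': $e_1$ does \emph{not} commute with $F=\tfrac12 e_1\mathbf w$, where $\mathbf w=P_V\mathcal M$. One has $e_1F=\tfrac12\mathbf w$, while $Fe_1=\tfrac12 e_1\mathbf w e_1=-\tfrac12 e_1e_1\mathbf w=-\tfrac12\mathbf w$. So $e_1F=-Fe_1$, and the factor $(-1)$ inserted in the middle of your chain is unjustified. This anticommutation is exactly why $e^{Ft}e_1e^{-Ft}$ rotates in Proposition~\ref{prop:Frotation}. Hence $e^{Ft}e_1A(t)e^{-Ft}=(e^{Ft}e_1e^{-Ft})A(0)=\bigl(\cos(\theta t)\,e_1-\theta^{-1}\sin(\theta t)\,\mathbf w\bigr)A(0)$ is not constant. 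The identity $e^{-Ft}e_1A(0)e^{Ft}=e_1A(t)$ that you use at the end also fails.

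This cannot be patched, because the displayed $B$ does not solve \eqref{eq:Bt} when $P_V\mathcal M\neq0$. Take $B_p=g(t)\,e_1A(t)$ with $g=(1-e^{-\kappa t})/(\lambda\kappa)$. Then $\partial_tB_p+\kappa B_p+[F,B_p]=\lambda^{-1}e_1A+g\,[F,e_1]A=\lambda^{-1}e_1A-g\,\mathbf wA$, and $\mathbf wA\neq0$ whenever $A$ is invertible. Running your conjugation argument correctly gives
\[
B(t)=e^{-\kappa t}e^{-Ft}B(0)e^{Ft}+\frac1\lambda\Bigl(\int_0^t e^{-\kappa r}\bigl(\cos(\theta r)\,e_1+\theta^{-1}\sin(\theta r)\,\mathbf w\bigr)\,\rd r\Bigr)A(t),
\]
and therefore $W(t)=e^{-\kappa t}e^{-Ft}W(0)e^{Ft}+\lambda^{-1}\int_0^t e^{-\kappa r}e^{-Fr}e_1e^{Fr}\,\rd r$. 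This agrees with the stated inhomogeneous term $\frac{e_1}{\lambda\kappa}(1-e^{-\kappa t})$ only when $P_V\mathcal M=0$, i.e.\ $F=0$.

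The paper gives no argument beyond integrating the two ODEs, and its formula appears to rest on the same tacit commutation of $e_1$ with $F$. So the $A$-formula stands, but the $B$- and $W$-formulas as stated cannot be proved when $P_V\mathcal M\neq 0$.
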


Observe that in the asymptotic region $x>>0$,  $A(t)=\Phi_x e_1$. \autoref{cor:ABintegrated}
provides a plethora of conserved, non-local quantities in $M$.  
\begin{corollary}\label{cor:tPhix}
 Let $x$ be in the right asymptotic region.    
 Then 
 \begin{equation}
  e^{F t}\Phi_xe_1 e^{-Ft}  
 \end{equation}
 is conserved.
\end{corollary}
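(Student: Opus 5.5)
The plan is to read the conservation law directly off the first integrated formula in Corollary \ref{cor:ABintegrated}, using the identification of $A$ with $\Phi_x e_1$ in the right asymptotic region.

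\textbf{Step 1: identify $A$.} For $x$ to the right of the support of $M$, the normalized solution satisfies $\Phi_{xx}=\lambda M\Phi=0$. Hence $\Phi$ is affine in $x$, and \eqref{rightasy} gives
\[
\Phi_x e_1 = A(t),
\]
with $A(t)$ independent of $x$ throughout that region. One point needs care: the right asymptotic region moves with $x_N(t)$. The quantity $\Phi_x e_1$ is nevertheless well defined there, since it equals the constant $A(t)$ for every $x>x_N(t)$. Applying $\partial_t$ at a point fixed to the right of the support for some time interval is therefore legitimate.

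\textbf{Step 2: conjugate.} By \eqref{eq:At}, $A_t=-[F,A]=AF-FA$. Here $F=\tfrac12 e_1 P_V\mathcal M$ is constant in $t$, because $\mathcal M$ is conserved by Lemma \ref{lem:uvass}. Then
\[
\frac{d}{dt}\big(e^{Ft}Ae^{-Ft}\big)=e^{Ft}\big(FA+A_t-AF\big)e^{-Ft}=0 .
\]
Equivalently, the explicit formula $A(t)=e^{-Ft}A(0)e^{Ft}$ from Corollary \ref{cor:ABintegrated} gives $e^{Ft}A(t)e^{-Ft}=A(0)$.

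\textbf{Step 3: conclude.} Combining Steps 1 and 2, $e^{Ft}\Phi_x e_1 e^{-Ft}$ is independent of $t$ for all $x$ in the right asymptotic region. Since $\Phi$ depends on $\lambda$, each coefficient in the $\lambda$-expansion of this quantity is a nonlocal conserved functional of $M$.

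The only potential obstacle is the bookkeeping about which $\Phi$ is meant. If the corollary refers to the gauged $\Psi=\Phi\Omega$ rather than $\Phi$, one checks that \eqref{eq:At} was derived for the $\Phi$ normalized to $1$ on the left, so no extra $\Omega$ factors enter. The statement then follows directly from the proposition on $A$ and $B$.
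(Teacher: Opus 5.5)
Your proposal is correct and matches the paper's argument. The paper gives no separate proof: it deduces the corollary directly from $A(t)=e^{-Ft}A(0)e^{Ft}$ in Corollary~\ref{cor:ABintegrated}, together with the identification $A=\Phi_x e_1$ for $x\gg 0$, so that $e^{Ft}\Phi_x e_1 e^{-Ft}=A(0)$; your differentiation using \eqref{eq:At} and the constancy of $F$ (from conservation of $\mathcal M$) checks the same fact without the integrated formula.
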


\subsection{Low-order $\lambda$ expansion} 
The wave function $\Phi$ is, in general, an entire function of $\lambda$.  In this paper, we will only use the first few terms in its expansion about 
$\lambda=0$.  We now examine the initial terms of this expansion.  
We construct $\Phi$, subject to the Neumann condition for $x<<0$, by solving iteratively.  To this end, we write the expansion around $\la=0$ as %
\begin{equation}
\Phi=1+\lambda\Phi^{(1)}+\lambda^2 \Phi^{(2)}+O(\lambda^3), 
\end{equation} 
where the iterates satisfy 
\begin{equation} 
\Phi_{xx}^{(n)} =M\Phi^{(n-1)}, \qquad
\mathrm{with} \quad \Phi^{(n)}(x)=0, \quad \Phi_{x}^{(n)}(x)=0, \quad \text{  for  } x<<0,  \text{   and all  } n\geq 1.  
\end{equation}
In a routine way, one gets the recurrence 
\begin{equation} 
\Phi^{(n)}(x)=\int_{-\infty<z<x} (x-z)M(z) \Phi^{(n-1)}(z) \,\rd z, 
\end{equation}
whose solution reads 
$$ 
\Phi^{(n)}(x)= 
\int\limits_{-\infty<z_1<z_2<\cdots<z_n<x} 
M(z_n)M(z_{n-1})\cdots M(z_1)\, \rd z_1\rd z_2\cdots \rd z_n.  
$$ 
We will only need the derivatives of $\Phi^{(1)}$ and $\Phi^{(2)}$ in the asymptotic region $x>>0$.  The following integral formulas provide them as
\begin{align} 
&\Phi_{x}^{(n)}=\int_{-\infty}^{\infty} M(z_1) \rd z_1=\mathcal{M}, \\
&\Phi_{x}^{(2)}=\int_{-\infty}^\infty M(z_2)\int_{-\infty}^{z_2} (z_2-z_1) M(z_1) \rd z_1\, \rd z_2.  \label{eq:Phi2}
\end{align} 

We can now infer the time evolution of \eqref{eq:Phi2} in the asymptotic region $x>>0$.  Indeed, by \autoref{cor:tPhix}, 
\begin{equation} 
\Phi_{x}^{(2)}=e^{-Ft} (c_1+c_3)e^{Ft}e_1^{-1}, 
\end{equation} 
where $c_1, c_3$ are constant elements of $\Cl_1, \Cl_3$,  respectively.  Of particular interest to us is, in the region $x>>0$, the trace of $\Phi_{x}^{(2)}$ which is given by 
\begin{equation} 
\tr(\Phi_{x}^{(2)})=\tr\big((c_1+c_3)e^{Ft}e_1^{-1} e^{-Ft}\big).  
\end{equation}  
\begin{proposition} \label{prop:Frotation}
When the bilinear form $(\cdot,\cdot)$ has Euclidean signature 
\begin{equation} 
 e^{Ft}e_1^{-1} e^{-Ft}=\cos (\theta t) \, e_1^{-1}- \frac{\sin (\theta t)}{\theta} \, P_V\mathcal{M},
 \end{equation} 
 where $\theta=||(P_V\mathcal{M}||$.  
 \end{proposition}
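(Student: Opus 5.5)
The plan is to compute the conjugation directly, using the fact that $F$ is a bivector whose square is a negative scalar. Set $w=P_V\mathcal M\in V$ and $\theta=\|w\|$, so that $F=\tfrac12 e_1 w$. In Euclidean signature we have $e_1^2=1$, hence $e_1^{-1}=e_1$. Also $w^2=(w,w)=\theta^2$, and $e_1w=-we_1$ because $w\perp e_1$; this anticommutation was already used in the proof of \eqref{eq:At}–\eqref{eq:Bt}.

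\textbf{Step 1: $F$ squares to a negative scalar.} From the relations above,
\[
F^2=\tfrac14\, e_1we_1w=-\tfrac14\, e_1e_1ww=-\tfrac{\theta^2}{4}.
\]
For $\theta\neq0$, put $J=\theta^{-1}e_1w$, so that $J^2=-1$ and $F=\tfrac{\theta}{2}J$. Summing the exponential series then gives the Euler-type formula
\[
e^{sF}=\cos(\theta s/2)+\sin(\theta s/2)\,J .
\]

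\textbf{Step 2: $F$ anticommutes with $e_1$.} We have $e_1F=\tfrac12 w$, while $Fe_1=\tfrac12 e_1we_1=-\tfrac12 w$. Hence $e_1F=-Fe_1$, which gives $e_1e^{-Ft}=e^{Ft}e_1$. Therefore
\[
e^{Ft}e_1^{-1}e^{-Ft}=e^{Ft}e_1e^{-Ft}=e^{2Ft}e_1 .
\]
This reduces the conjugation to a single left multiplication.

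\textbf{Step 3: expand.} Applying Step 1 with $s=2t$,
\[
e^{2Ft}e_1=\cos(\theta t)\,e_1+\frac{\sin(\theta t)}{\theta}\,e_1we_1=\cos(\theta t)\,e_1^{-1}-\frac{\sin(\theta t)}{\theta}\,w .
\]
This is exactly the claimed formula.

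\textbf{Degenerate case $\theta=0$.} Here $w^2=(w,w)=0$ forces $w=0$ in Euclidean signature, so $F=0$. Both sides then equal $e_1$, with $\sin(\theta t)/\theta$ interpreted as its limit $t$, whose coefficient $w$ vanishes anyway.

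The only real point of care is where the signature enters. Euclidean signature is what makes $F^2$ a negative scalar, producing trigonometric rather than hyperbolic functions, and it is also what guarantees $\theta=0$ implies $w=0$. In Minkowski signature the same computation would give $\cosh$ and $\sinh$, and null $w$ would cause trouble, so I will flag this explicitly in the write-up.
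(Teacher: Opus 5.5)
Your proof is correct, but it takes a different route from the paper. The paper sets $Z(t)=e^{Ft}e_1e^{-Ft}$ and asserts, on the grounds that this is a rotation, that $Z(t)=a(t)e_1+b(t)P_V\mathcal M$ stays in the $(e_1,P_V\mathcal M)$ plane. It then reads off the linear system $\dot a=\theta^2 b$, $\dot b=-a$ from $\dot Z=[F,Z]$. For this it needs only the two commutators $[F,e_1]=-(e_1,e_1)w$ and $[F,w]=(w,w)e_1$, with $w=P_V\mathcal M$ as in your notation.

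You instead sum the exponential in closed form from $F^2=-\theta^2/4$. You then use the anticommutation $e_1F=-Fe_1$ to collapse the conjugation into the one-sided product $e^{2Ft}e_1$.

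Your version is purely algebraic and needs no ansatz. The paper's claim that $Z(t)$ remains in the plane silently relies on $\mathrm{ad}_F$ preserving $\mathrm{span}(e_1,w)$, together with uniqueness for the linear ODE. Your version also shows explicitly the half-angle structure of the rotor $e^{Ft}$, and it handles $\theta=0$ cleanly.

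The paper's version avoids series manipulations altogether. It also places the signature data $(e_1,e_1)$ and $(w,w)$ directly in the ODE coefficients, so the trigonometric versus hyperbolic dichotomy is visible at once. Your identity $F^2=-\tfrac14(e_1,e_1)(w,w)$ carries the same information.

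One small correction to your closing remark: a null $w\neq 0$ causes no real trouble. In that case $F^2=0$ and $e^{2Ft}=1+2Ft$. Since $e_1F=-Fe_1$ holds in any signature, the same computation gives $e^{Ft}e_1^{-1}e^{-Ft}=e_1^{-1}-t\,w$. This is exactly the $\theta\to 0$ limit of the stated formula.
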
 
 \begin{proof}
 Since $e_1^{-1}=e_1$, we might as well work with  $Z(t)=e^{Ft}e_1 e^{-Ft}$.  Recall that $F=\frac{e_1 P_V \mathcal{M}}{2}$.  Hence,  $e^{Ft}e_1 e^{-Ft}$ represents a rotation  in the 
 $(e_1, P_V\mathcal{M})$ plane.  So $Z(t)=a(t)e_1+b(t) P_V\mathcal{M}$, and 
 $$\dot Z=\dot a e_1+\dot b P_V\mathcal{M}. $$  On the other hand 
 $$\dot Z=[\frac{e_1P_V \mathcal{M}}{2}, ae_1+b P_V\mathcal{M}]=-a(e_1,e_1)P_V\mathcal{M}+b(P_V\mathcal{M},P_V\mathcal{M})e_1. $$
 Comparing the coefficients of $e_1, P_V\mathcal{M}$, we get a system of ODEs
 \begin{equation*} 
 \dot a=||P_V\mathcal{M}||^2 b, \qquad \dot b=- a, 
 \end{equation*} 
 whose solution is 
 $$ 
 a(t)=\cos(||P_V\mathcal{M}|| \, \,  t), 
 \qquad b(t)=- \frac{\sin( ||P_V \mathcal{M}|| \, \,  t)}{||P_V\mathcal{M}||}. 
 $$ 
 %
 \end{proof} 
 \begin{corollary} \label{cor:trace evolution}
 In the region $x>>0$, the trace of $\Phi_{x}^{(2)}$ takes the form
\begin{equation} 
\tr(\Phi_{x}^{(2)})=C_1 \cos (\theta t)+ C_2 \sin (\theta t), 
\end{equation} 
for some constants $C_1, C_2$,  and $\theta$ defined in \autoref{prop:Frotation}.  
\end{corollary}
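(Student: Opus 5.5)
The plan is to substitute the formula of Proposition \ref{prop:Frotation} into the trace expression written just before it,
\[
\tr(\Phi_{x}^{(2)})=\tr\big((c_1+c_3)e^{Ft}e_1^{-1} e^{-Ft}\big),
\]
and then use linearity of $\tr$. The only structural input needed is that the trace of a product of graded components vanishes unless the product has a nonzero $\Cl_0$ part.

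\textbf{Step 1: substitute the rotation formula.} By Proposition \ref{prop:Frotation}, with $\theta=\|P_V\mathcal M\|$,
\[
\tr(\Phi_{x}^{(2)})=\cos(\theta t)\,\tr\big((c_1+c_3)e_1^{-1}\big)-\frac{\sin(\theta t)}{\theta}\,\tr\big((c_1+c_3)P_V\mathcal M\big).
\]

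\textbf{Step 2: kill the $\Cl_3$ contributions.} Both $e_1^{-1}=e_1$ and $P_V\mathcal M$ lie in $\Cl_1$. The product of a grade-3 element with a grade-1 element has components only in grades 2 and 4, so it has no scalar part. Hence $\tr(c_3e_1)=0$ and $\tr(c_3P_V\mathcal M)=0$.

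\textbf{Step 3: evaluate the $\Cl_1$ contributions.} For the grade-1 element $c_1$, the identity $\tr(ab)=(a,b)$ on $\Cl_1$ gives $\tr(c_1e_1)=(c_1,e_1)$ and $\tr(c_1P_V\mathcal M)=(c_1,P_V\mathcal M)$. Since $c_1$ and $\mathcal M$ are time-independent (the latter by Lemma \ref{lem:uvass}), these are constants. Setting
\[
C_1=(c_1,e_1),\qquad C_2=-\frac{(c_1,P_V\mathcal M)}{\theta},
\]
yields the claimed form $\tr(\Phi_{x}^{(2)})=C_1\cos(\theta t)+C_2\sin(\theta t)$.

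\textbf{Degenerate case.} If $\theta=0$, interpret $\sin(\theta t)/\theta$ as its limit $t$. In fact $P_V\mathcal M=0$ then, so $F=0$ and the trace is simply the constant $C_1$, which is consistent with the formula.

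\textbf{Main obstacle.} The step needing the most care is justifying the grade decomposition $c_1+c_3$ inherited from Corollary \ref{cor:tPhix}, namely that $e^{Ft}\Phi^{(2)}_xe_1e^{-Ft}$ has only odd grades 1 and 3. This follows because $\Phi_x^{(2)}$ is a product of two $\Cl_1$ elements, hence lies in $\Cl_0\oplus\Cl_2$. Right-multiplying by $e_1$ moves it into grades 1 and 3, and conjugation by the even element $e^{Ft}$ preserves grade. Once this decomposition is granted, the rest is the routine trace bookkeeping above.
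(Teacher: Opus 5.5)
Your proposal is correct and is essentially the paper's argument: the corollary follows by substituting Proposition~\ref{prop:Frotation} into $\tr\big((c_1+c_3)e^{Ft}e_1^{-1}e^{-Ft}\big)$ and noting that the coefficients $\tr\big((c_1+c_3)e_1\big)$ and $-\theta^{-1}\tr\big((c_1+c_3)P_V\mathcal M\big)$ are time-independent, while your grade bookkeeping simply evaluates these constants explicitly. One small precision: conjugation by $e^{Ft}$ preserves grade because $F\in\Cl_2$ (so $[F,\cdot\,]$ maps each $\Cl_k$ to itself), not merely because $e^{Ft}$ is even; more simply, you can evaluate the conserved element $e^{Ft}\Phi_x^{(2)}e_1e^{-Ft}$ at $t=0$, where it is visibly in $\Cl_1\oplus\Cl_3$.
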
 


\subsection{ The forward problem}
We start by solving  the initial value problem 
\begin{equation}\label{eq:IVP}
\Phi_{xx}=\lambda M \Phi, \qquad \mathrm{with} \quad \Phi(x)=1, \,\,\Phi_x(x)=0 \quad\mathrm{for} \quad x<x_1,   \end{equation}
for the Clifford value measure specified in \eqref{eq:fM}.  
The solution $\Phi$ is piecewise linear in $x$, while $\Phi_x$ is piecewise constant.  Thus, for any $\lambda$, the solution is
uniquely characterized by $\Phi_j=\Phi(x_j)$, the left derivatives $\Phi'_j=\Phi_x(x_j-)$, the jump condition 
\begin{equation} \label{eq:Phixjump}
\Phi'_{j+1}-\Phi_j=\lambda M_j \Phi_j, \qquad 1\leq j\leq N, 
\end{equation}
and the relation
\begin{equation}\label{eq:Phirec}
\Phi_{j+1}=\Phi_{j}+l_j \Phi'_{j+1}, \qquad 1\leq j\leq N-1. 
\end{equation}
We note that $\Phi'_{N+1}$ is just the right derivative 
$\Phi_x(x_N+)$.  Both recurrence relations can be written 
in matrix form, as follows: 
\begin{subequations}
\begin{equation} \label{eq:firstrec}
\hspace{0.55cm} \begin{bmatrix}\Phi_{j+1}\\ \Phi'_{j+1}\end{bmatrix}=\begin{bmatrix}
    1&l_j\\ 0&1 \end{bmatrix} \begin{bmatrix}\Phi_j\\ \Phi'_{j+1}
\end{bmatrix}, \qquad 1\leq j\leq N-1, 
\end{equation}
\begin{equation} \label{eq:secondrec}
\begin{bmatrix}\Phi_{j}\\ \Phi'_{j+1}\end{bmatrix}=\begin{bmatrix}
    1&0\\ \lambda M_j&1 \end{bmatrix} \begin{bmatrix}\Phi_j\\ \Phi'_{j}
\end{bmatrix}, \qquad 1\leq j\leq N. 
\end{equation}
\end{subequations}

\begin{proposition}\label{prop:inv} 
The quantities $\Phi_{j}$ and $\Phi_{j}'$ are polynomials in   $\lambda$ with degrees 
$\deg(\Phi_{j})=\deg(\Phi'_j)=j-1$. 
For all $j\geq 2$, $\Phi_j$ and $\Phi'_j$ are invertible for sufficiently large
$\lambda$.  
\end{proposition}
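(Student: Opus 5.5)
The plan is a direct induction on $j$ using the two recurrences \eqref{eq:firstrec} and \eqref{eq:secondrec}, written together as
\[
\Phi'_{j+1}=\Phi'_j+\lambda M_j\Phi_j, \qquad \Phi_{j+1}=\Phi_j+l_j\Phi'_{j+1},
\]
where $l_j=x_{j+1}-x_j>0$. The initial data are $\Phi_1=1$ and $\Phi'_1=0$. Since $\Phi'_1=0$ is the zero polynomial, the degree claim for $\Phi'_j$ is really a statement for $j\ge 2$, and I will start the induction at $j=2$. There $\Phi'_2=\lambda M_1$ and $\Phi_2=1+l_1\lambda M_1$, both of degree $1$. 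Polynomiality in $\lambda$ with coefficients in $\Cl(W)$ is obvious from the recurrences, since each step only multiplies by $\lambda M_j$ or by the scalar $l_j$ and adds.

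The key step is to track leading coefficients while keeping the noncommutative order straight. Suppose $\deg\Phi_j=\deg\Phi'_j=j-1$. Then $\lambda M_j\Phi_j$ has degree $j$, while $\Phi'_j$ has degree only $j-1$, so
\[
\mathrm{lead}(\Phi'_{j+1})=M_j\,\mathrm{lead}(\Phi_j).
\]
Similarly $l_j\Phi'_{j+1}$ dominates $\Phi_j$, which gives $\mathrm{lead}(\Phi_{j+1})=l_j\,\mathrm{lead}(\Phi'_{j+1})$. Unwinding these relations gives the explicit formulas
\[
\mathrm{lead}(\Phi_j)=\Big(\prod_{k=1}^{j-1}l_k\Big)M_{j-1}M_{j-2}\cdots M_1,\qquad
\mathrm{lead}(\Phi'_j)=\Big(\prod_{k=1}^{j-2}l_k\Big)M_{j-1}\cdots M_1 .
\]
Each $M_k$ is invertible and each $l_k>0$, so these leading coefficients are invertible elements of $\Cl(W)$, and in particular nonzero. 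That nonvanishing is exactly what makes the degree count exact and closes the induction.

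For the invertibility statement, I will write $\Phi_j(\lambda)=\lambda^{j-1}\big(L_j+R_j(\lambda^{-1})\big)$, where $L_j$ is the invertible leading coefficient and $R_j$ is a polynomial in $\lambda^{-1}$ with no constant term. Since $\Cl(W)$ is a finite-dimensional unital algebra, an element $a$ is invertible if and only if the left-multiplication map $\ell_a$ is a bijection, that is, $\det\ell_a\ne0$. (Injectivity of $\ell_a$ gives surjectivity, hence some $b$ with $ab=1$, and in finite dimensions a one-sided inverse is two-sided.) The function $\lambda\mapsto\det\ell_{\Phi_j(\lambda)}$ is a scalar polynomial in $\lambda$. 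Its top-degree coefficient is $\det\ell_{L_j}\ne0$, so it has only finitely many roots, and $\Phi_j(\lambda)$ is invertible for all $|\lambda|$ sufficiently large. Alternatively, one can apply a Neumann series to $L_j+R_j$. The same argument applies verbatim to $\Phi'_j$.

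The main obstacle, though a mild one, is the bookkeeping in the second paragraph. One must make sure the leading term really does come from the $\lambda M_j\Phi_j$ and $l_j\Phi'_{j+1}$ pieces, so that no cancellation can occur, and one must keep the left-to-right order $M_{j-1}\cdots M_1$, since the algebra is noncommutative. It is invertibility of this ordered product, rather than merely its nonvanishing, that feeds the final step.
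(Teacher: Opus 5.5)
Your proposal is correct and follows the same induction on the two recurrences as the paper's proof, with the same base case $j=2$. Your explicit invertible leading coefficient $\prod_{k=1}^{j-1}l_k\,M_{j-1}\cdots M_1$, together with the determinant-of-left-multiplication argument, supplies the justification the paper only asserts when it says the dominant term is invertible and hence so is the whole polynomial; your recurrence also correctly identifies that dominant term of $\Phi_{j+1}$ as $\lambda l_jM_j\Phi_j$, where the paper's text has the slip $\lambda l_jM_j\Phi'_j$.
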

\begin{proof}
The proof proceeds by induction on $j$. 
The fact that they are polynomials of 
the given degrees  follows directly from the recurrence relations. The base case 
corresponds to $j=2$.  We have 
\begin{equation}
\begin{bmatrix}\Phi_2\\ \Phi_2' \end{bmatrix}=\begin{bmatrix}1+\lambda l_1 M_1&l_1\\ \lambda M_1&1 \end{bmatrix}\begin{bmatrix}1\\0
\end{bmatrix}=\begin{bmatrix}1+\lambda l_1 M_1\\\lambda M_1 \end{bmatrix},   
\end{equation}
where we used $\Phi_1=1$ and $\Phi'_1=0$ from the initial value problem \eqref{eq:IVP}. 
Both entries are invertible for large $\lambda$ because $M_1$ is 
invertible.  
The recurrence relations give the following:
\begin{equation*}
\begin{bmatrix}\Phi_{j+1}\\ \Phi_{j+1}' \end{bmatrix}=\begin{bmatrix}1+\lambda l_j M_j&l_j\\ \lambda M_j&1 \end{bmatrix}\begin{bmatrix}\Phi_j\\\Phi'_j
\end{bmatrix}=\begin{bmatrix}\Phi_j+\lambda l_j M_j\Phi'_j\\\Phi'_j+\lambda M_j\Phi_j \end{bmatrix}.  
\end{equation*}
The dominant terms in $\lambda$ are $\lambda l_j M_j \Phi'_j$ for the first entry, $\lambda M_j \Phi_j$ for the second entry.  
Both are invertible by the inductive hypothesis, so $\Phi_{j+1}$ and $\Phi'_{j+1}$ are invertible.  
\end{proof}
We can now summarize the recursive way of solving 
the initial value problem.  
\begin{proposition} The solution of the initial value problem \eqref{eq:IVP} can be expressed by the recursive formula 
\begin{equation}
\begin{bmatrix}\Phi_N\\ \Phi'_{N+1}\end{bmatrix}=
\begin{bmatrix}1&0\\ \lambda M_N&1 \end{bmatrix}\left( \prod_{j=1}^{N-1}\begin{bmatrix}1+\lambda l_j M_j&l_j\\ \lambda M_j&1 \end{bmatrix}\right)\begin{bmatrix}1\\ 0 \end{bmatrix}
\end{equation}
\end{proposition}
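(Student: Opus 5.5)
The plan is to compose the two elementary transfer matrices of \eqref{eq:firstrec} and \eqref{eq:secondrec} into a single one-step map, and then iterate that map starting from the initial data of \eqref{eq:IVP}. The first step is to combine the jump across $x_j$ with the free propagation over the interval $(x_j,x_{j+1})$, where $l_j=x_{j+1}-x_j$. Substituting \eqref{eq:secondrec} into \eqref{eq:firstrec} gives, for $1\le j\le N-1$,
\begin{equation*}
\begin{bmatrix}\Phi_{j+1}\\ \Phi'_{j+1}\end{bmatrix}
=\begin{bmatrix}1&l_j\\ 0&1\end{bmatrix}\begin{bmatrix}1&0\\ \lambda M_j&1\end{bmatrix}\begin{bmatrix}\Phi_j\\ \Phi'_j\end{bmatrix}
=\begin{bmatrix}1+\lambda l_jM_j&l_j\\ \lambda M_j&1\end{bmatrix}\begin{bmatrix}\Phi_j\\ \Phi'_j\end{bmatrix}.
\end{equation*}
This is exactly the one-step matrix already used in the proof of Proposition \ref{prop:inv}. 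The multiplication here is a product of $2\times2$ matrices with entries in $\Cl(W)$.

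The second step is a straightforward induction on $j$. Starting from $\Phi_1=1$ and $\Phi'_1=0$, which hold because $\Phi\equiv1$ for $x<x_1$, we obtain
\begin{equation*}
\begin{bmatrix}\Phi_N\\ \Phi'_N\end{bmatrix}=\left(\prod_{j=1}^{N-1}\begin{bmatrix}1+\lambda l_jM_j&l_j\\ \lambda M_j&1\end{bmatrix}\right)\begin{bmatrix}1\\0\end{bmatrix}.
\end{equation*}
The product must be read in the order with the largest index on the left. This ordering is required because each new step multiplies on the left, which matters since neither the $M_j$ nor the resulting matrices commute. The third and final step applies the last jump \eqref{eq:secondrec} with $j=N$, that is, left multiplication by $\begin{bmatrix}1&0\\ \lambda M_N&1\end{bmatrix}$. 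This produces $(\Phi_N,\Phi'_{N+1})^T$ and hence the stated formula.

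There is no genuine obstacle in this argument. The only points requiring care are the two noted above. First, the product convention has to be stated explicitly, because the entries lie in the noncommutative algebra $\Cl(W)$: the factors act on column vectors from the left, so entries such as $\lambda M_j$ must multiply $\Phi_j$ on the left. This matches $\Phi_{xx}=\lambda M\Phi$ and the jump condition \eqref{eq:Phixjump}. Second, one must check that $\Phi'_{N+1}$ really is the right derivative $\Phi_x(x_N+)$. This follows from the jump condition, since $\Phi$ is linear on $(x_N,\infty)$.
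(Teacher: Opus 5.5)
Your proposal is correct and is essentially the paper's own argument: the paper gives no separate proof, presenting the proposition as a summary of the recurrences \eqref{eq:firstrec}--\eqref{eq:secondrec}, which are already composed into the one-step matrix $\begin{bmatrix}1+\lambda l_jM_j&l_j\\ \lambda M_j&1\end{bmatrix}$ in the proof of Proposition \ref{prop:inv}, iterated from $\Phi_1=1$, $\Phi'_1=0$, and closed off by the jump at $x_N$. Your explicit remark that the product must be ordered with the largest index on the left, since the entries lie in the noncommutative algebra $\Cl(W)$, is a worthwhile clarification that the paper leaves implicit.
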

From the definition of the Weyl function given by \eqref{eq:W}, we get a formula 
$W(\lambda,t)=B(\lambda,t)A^{-1}(\lambda,t)=\Phi_Ne_1( \Phi_{N+1}'e_1)^{-1}=\Phi_N( \Phi_{N+1}')^{-1}$.  

\begin{theorem}[Stieltjes-Clifford fraction] \label{thm:SCfrac}
For large $\lambda$, the Weyl function is given by a finite continued fraction of Stieltjes type, with Clifford algebra-valued partial quotients, that is   
\begin{equation} \label{eq:SCfrac}
W(\lambda)=\cfrac{1}{\lambda M_N +\cfrac{1}{l_{N-1}+\cfrac{1}{\lambda M_{N-1}+\cfrac{1}{\ddots+\cfrac{1}{\lambda M_1}}}}}
\end{equation}
\end{theorem}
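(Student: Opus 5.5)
We prove \eqref{eq:SCfrac} by a backward induction on the ratios $W_j:=\Phi_j(\Phi'_{j+1})^{-1}$, $1\le j\le N$, using the recurrences \eqref{eq:Phixjump} and \eqref{eq:Phirec}. Since $W(\lambda)=\Phi_N(\Phi'_{N+1})^{-1}$ by the remark preceding the theorem, it is enough to express each $W_j$ in terms of $W_{j-1}$ and then unwind from $j=N$ down to $j=1$. Throughout, ``for large $\lambda$'' is handled by Proposition \ref{prop:inv}. It guarantees that $\Phi_j$ and $\Phi'_j$ are invertible for $j\ge 2$, and hence for all sufficiently large $\lambda$ simultaneously, since there are finitely many $j$.

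The first step is the algebraic recursion. For $2\le j\le N$, the jump relation gives $\Phi'_{j+1}=\Phi'_j+\lambda M_j\Phi_j$. Right-multiplying by $\Phi_j^{-1}$ gives
\[
\Phi'_{j+1}\Phi_j^{-1}=\lambda M_j+\Phi'_j\Phi_j^{-1}, \qquad\text{so}\qquad W_j=\bigl(\lambda M_j+\Phi'_j\Phi_j^{-1}\bigr)^{-1}.
\]
Next, \eqref{eq:Phirec} with index $j-1$ reads $\Phi_j=\Phi_{j-1}+l_{j-1}\Phi'_j$. Right-multiplying by $(\Phi'_j)^{-1}$ gives $\Phi_j(\Phi'_j)^{-1}=l_{j-1}+W_{j-1}$, where $l_{j-1}$ is a scalar and hence central. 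Therefore
\[
\Phi'_j\Phi_j^{-1}=\bigl(l_{j-1}+W_{j-1}\bigr)^{-1}, \qquad\text{and so}\qquad W_j=\cfrac{1}{\lambda M_j+\cfrac{1}{l_{j-1}+W_{j-1}}}.
\]
Here the fraction notation means left/right inverses in $\Cl(W)$, which coincide because the algebra is finite-dimensional. The base case uses $\Phi_1=1$ and $\Phi'_1=0$, so $\Phi'_2=\lambda M_1$ and $W_1=(\lambda M_1)^{-1}=\frac{1}{\lambda M_1}$. This is invertible for $\lambda\neq0$ because each $M_j\in\Cl_1$ is invertible.

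Iterating the recursion from $j=N$ down to $j=2$ and substituting $W_1$ produces exactly the nested expression \eqref{eq:SCfrac}. Noncommutativity causes no ambiguity in the form of the continued fraction: at every stage we take the inverse of a sum of one Clifford-valued term and one scalar or already-inverted term, and no reordering of products is ever needed. This is the point to emphasise, since it is what makes the Stieltjes form survive in the Clifford setting.

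The main obstacle is justifying every inversion. We need $\Phi_j$ for $j\ge2$, $\Phi'_j$ for $j\ge2$, $\lambda M_1$, and the intermediate sums $\lambda M_j+\Phi'_j\Phi_j^{-1}$ and $l_{j-1}+W_{j-1}$ to be invertible. The last two are automatic, because they equal $\Phi'_{j+1}\Phi_j^{-1}$ and $\Phi_j(\Phi'_j)^{-1}$ respectively. Those are products of elements that Proposition \ref{prop:inv} shows to be invertible for large $\lambda$, together with the invertibility of $\Phi'_{N+1}$, which follows from the same inductive argument as in that proposition applied to the final jump.
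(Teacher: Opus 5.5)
Your proposal is correct and follows essentially the same route as the paper. The paper likewise iterates backward via $\Phi_j(\Phi'_{j+1})^{-1}=\bigl(\lambda M_j+(l_{j-1}+\Phi_{j-1}(\Phi'_j)^{-1})^{-1}\bigr)^{-1}$, obtained from the two recurrences, stops at $\Phi_1(\Phi'_2)^{-1}$, and takes invertibility from Proposition \ref{prop:inv}. Your version is slightly more careful than the printed one: you use the correct jump relation $\Phi'_{j+1}-\Phi'_j=\lambda M_j\Phi_j$ and the correct terminal value $(\lambda M_1)^{-1}$ (the paper writes $\lambda M_1$, a slip), and you also justify the invertibility of $\Phi'_{N+1}$ and of the intermediate sums.
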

\begin{proof}
We note that by \autoref{prop:inv} all $\Phi_j$ and $\Phi'_j$ are invertible for sufficiently large $\lambda$.  By the last iteration of each recurrence relation  ($j=N$ in \eqref{eq:Phixjump} and 
\eqref{eq:Phirec}), we have  
\begin{align*}
&\Phi'_{N+1}=\Phi'_N+\lambda M_N \Phi_N
, \\
&\Phi_N=\Phi_{N-1}+l_{N-1} \Phi'_N,  
\end{align*}
which gives 
\begin{align*}
    \Phi_N(\Phi'_{N+1})^{-1}=&(\lambda M_N+\Phi'_N \Phi_N^{-1})^{-1}=
    (\lambda M_N+(\Phi_N(\Phi'_N)^{-1})^{-1})^{-1}\\=&\big(\lambda M_N+(l_{N-1}+\Phi_{N-1} (\Phi'_{N})^{-1})^{-1}\big)^{-1}.  
\end{align*}
This is the first step in the iteration; we moved from $\Phi_N(\Phi'_{N+1})^{-1}$ to $\Phi_{N-1}(\Phi'_{N})^{-1}$.  We proceed by iterating down until we reach $\Phi_1(\Phi'_{2})^{-1}=\lambda M_1$, since
$\Phi_1=1$ and $\Phi'_1=0$ for the Neumann problem.   This concludes the proof.  
\end{proof}

Finally, the time evolution of $W$ is given by \eqref{eq:Weyl evol}, while the continued fraction expansion of $W$ determines the masses $M_j$ and the spacings $l_{N-1}, \dots, l_1$. 
We will provide a detailed analysis of the inverse problem elsewhere, with particular emphasis on the explicit representations of the quantities \(M_j\) and \(l_j\) in terms of the Weyl matrix function \(W\). 
%
To complete the formal solution of the inverse problem, we still need to determine the position of the last particle, $x_N$. 
\subsection{Determining the position $x_N$}
We cannot determine the position $x_N$ from the Weyl function $ W$. This is similar to the scalar Hunter-Saxton equation, for which the same question was addressed in  \cite{BSS2001}.  
$x_N$ satisfies (see \eqref{eq:dotxj})
\begin{equation*} 
\dot x_N=-u(x_N)=-u_+(x_N)=-((\mathcal{M}, e_1)x_N-(\mathcal{M}_1,e_1)), 
\end{equation*}
or, 
\begin{equation}\label{eq:dotxN}
\dot x_N+(\mathcal{M},e_1)x_N=(\mathcal{M}_1,e_1). 
\end{equation} 
The latter equation can be easily integrated, provided we know the time dependence of $(\mathcal{M}_1,e_1)$.  
\begin{proposition} The inner product of the first moment $\mathcal{M}_1$ with $e_1$ satisfies 
\begin{equation} 
\frac{d}{dt} (\mathcal{M}_1, e_1)=-\tr(\Phi_{x}^{(2)}).  
\end{equation} 

\end{proposition}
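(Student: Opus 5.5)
We verify the identity directly from the unintegrated system \eqref{eq:uxxt_compact}--\eqref{eq:vxt_compact}, using the integral representations of $u$ and $\bv$, rather than the Weyl function. The strategy is to express both sides as double integrals of $M\otimes M$ against the kernel $|y-z|$, and match them.

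\emph{Left-hand side.} Since $(M,e_1)=\tfrac12 u_{xx}$, we have $(\mathcal M_1,e_1)=\tfrac12\int y\,u_{xx}\,\rd y$. Differentiating in $t$ and using \eqref{eq:uxxt_compact}, the integrand becomes $\tfrac12 y\,D_x G$ with
$$G=uu_{xx}+\tfrac12u_x^2-\tfrac12(\bv,\bv).$$
The support of $M$ is compact, so $u_{xxt}=0$ outside it and hence $G$ is constant there. By Lemma~\ref{lem:uvass}, this constant is the same on both sides:
$$G_\infty=\tfrac12\big((\mathcal M,e_1)^2-(P_V\mathcal M,P_V\mathcal M)\big).$$
We integrate by parts on $[-R,R]$ with $R$ large. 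The boundary term is $R\,G(R)+R\,G(-R)=2RG_\infty=\int_{-R}^R G_\infty$. This gives the absolutely convergent formula
$$\frac{d}{dt}(\mathcal M_1,e_1)=\tfrac12\int_{\R}(G_\infty-G)\,\rd x.$$

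\emph{Right-hand side.} Starting from \eqref{eq:Phi2} and using $\tr(ab)=(a,b)$ on $\Cl_1$, we get
$$\tr\Phi^{(2)}_x=\iint_{z_1<z_2}(z_2-z_1)(M(z_2),M(z_1))=\tfrac12\iint|y-z|\,(M(y),M(z)).$$
We then split $(M(y),M(z))$ into its $e_1$-part $(M,e_1)(M,e_1)$ and its $V$-part $(P_VM,P_VM)$.

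\emph{Matching.} We expand each piece of $G_\infty-G$ using \eqref{eq:uintf} and $\bv=-P_V\int\operatorname{sgn}(x-y)M$. The key elementary identity is
$$\int_{-R}^{R}\operatorname{sgn}(x-y)\operatorname{sgn}(x-z)\,\rd x=2R-2|y-z|$$
for $y,z$ in the support. Applying it gives
$$\int\big((P_V\mathcal M,P_V\mathcal M)-(\bv,\bv)\big)\rd x=2\iint|y-z|\,(P_VM,P_VM),$$
and the analogous identity holds for $\int(u_{x,\infty}^2-u_x^2)$, with $u_x=\int\operatorname{sgn}(x-y)(M,e_1)$.

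The remaining term $\int uu_{xx}=2\iint|y-z|(M,e_1)(M,e_1)$ comes directly from \eqref{eq:uintf}. Collecting the $e_1$-part and the $V$-part separately, each should reproduce $-\tfrac12\iint|y-z|(\cdot,\cdot)$. Keeping track of the signs is the routine bookkeeping here; in particular, $(\bv,\bv)$ enters $G$ with a minus sign, which is what makes the $V$-part come out with the same overall sign as the $e_1$-part.

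\emph{Main obstacle.} The delicate point is justifying these manipulations for the discrete measure \eqref{eq:fM}. There, $uu_{xx}$ and $u_x^2$ involve products of a delta with a jump. I would handle this in one of two ways. The first is to interpret products via the averaged values $\langle u_x\rangle(x_j)$, consistent with \eqref{eq:dotMj}. The second, which I would try first, is to redo the computation directly from the ODEs: differentiate $(\mathcal M_1,e_1)=\sum_j x_j(M_j,e_1)$ using \eqref{eq:dotxj} and \eqref{eq:dotMj}. This gives
$$\sum_j\Big(-u(x_j)(M_j,e_1)+x_j\langle u_x\rangle(x_j)(M_j,e_1)+x_j\big(\tfrac12[e_1\langle\bv\rangle(x_j),M_j],e_1\big)\Big).$$
The commutator term simplifies to $-x_j(\langle\bv\rangle(x_j),P_VM_j)$. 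One then compares the result with the discrete sum $\tfrac12\sum_{i,j}|x_i-x_j|(M_i,M_j)$. The pairing of $x_j\,\operatorname{sgn}(x_j-x_i)$ with $|x_i-x_j|$, after symmetrizing in $i,j$, is where the actual cancellation happens.
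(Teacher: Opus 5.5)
Your first argument is correct and gives a genuinely different proof from the paper's. The paper works only with the peakon ODEs. It differentiates $\sum_j x_j(M_j,e_1)$ using \eqref{eq:dotxj}--\eqref{eq:dotMj}, symmetrizes the two resulting sums $I_1$ and $I_2$, and compares the result with the discrete form of $\tr\Phi^{(2)}_x$. You instead use the conservation form of \eqref{eq:uxxt_compact} to get $\frac{d}{dt}(\mathcal M_1,e_1)=\tfrac12\int(G_\infty-G)\,\rd x$, and then apply the $\sgn$-product identity. The bookkeeping closes. Write $A=\iint|y-z|(M(y),e_1)(M(z),e_1)$ and $B=\iint|y-z|(P_VM(y),P_VM(z))$. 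The $uu_{xx}$, $u_x^2$ and $(\bv,\bv)$ terms then contribute $-A$, $+\tfrac12A$ and $-\tfrac12B$, so the total is $-\tfrac12(A+B)=-\tr\Phi^{(2)}_x$.

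Your version holds for any compactly supported $M$ and shows that the identity is a flux identity. The paper's version is tailored to peakons and needs no distributional interpretation. Your worry about the discrete case is overstated. $G$ itself contains only $u\,u_{xx}$, which is deltas times the continuous $u$, and the bounded piecewise-constant functions $u_x^2$ and $(\bv,\bv)$; no delta is ever multiplied by a jump. The distributional derivative of $f^2$ at a jump of $f$ is twice the average of $f$ times the jump of $f$, and this is exactly the averaged-value rule in \eqref{eq:dotMj}. So the peakon ansatz is a distributional solution of \eqref{eq:uxxt_compact}, and your computation applies verbatim.

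The ODE route you say you would try first is the paper's proof, but as written it has a sign error that would make the final comparison fail. Since $e_1$ anticommutes with $V$, one has $\tfrac12[e_1\bv,M_j]=(M_j,\bv)\,e_1-(M_j,e_1)\,\bv$, which is the discrete form of the rotation term in \eqref{eq:Mtgen_compact}. So the commutator contributes $+x_j(\langle\bv\rangle(x_j),P_VM_j)$, not $-x_j(\langle\bv\rangle(x_j),P_VM_j)$. Use $\langle\bv\rangle(x_j)=-\sum_k\sgn(x_j-x_k)P_VM_k$ with the correct sign. This gives $-\tfrac12\sum_{j,k}|x_j-x_k|(P_VM_j,P_VM_k)$, which combines with $I_1=-\tfrac12\sum_{j,k}|x_j-x_k|(M_j,e_1)(M_k,e_1)$ to give $-\tr\Phi^{(2)}_x$. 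Your sign would instead leave $+\tfrac12$ in front of the $V$-part. It is also inconsistent with the $-\tfrac12(\bv,\bv)_x$ term you correctly use in your first route, since that term forces the plus sign.
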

\begin{proof} 
We want to compute $\frac{d}{dt} \sum_j x_j (M_j, e_1)$.  
Since $\dot x_j=-u(x_j)$ and $\dot M_j=\avg{u_x}(x_j)M_j +\frac12[e_1\avg{v}(x_j), M_j]$ we obtain 
\begin{equation} 
\frac{d}{dt} \sum_j x_j (M_j, e_1)=\overbrace{\sum_j (-u(x_j)+x_j\avg{u_x}(x_j))(M_j, e_1)}^{=I_1}+\overbrace{\sum_j x_j (\avg{v}(x_j), M_j) (e_1, e_1)}^{=I_2}
\end{equation} 
We recall that $u(x)=\int_\R \abs{x-y}(M(y), e_1) \rd y$ which in the discrete case becomes
\begin{equation*} 
u(x_j)=\sum_k \abs{x_j-x_k} (M_k, e_1). 
\end{equation*}
Likewise, $\avg{u_x}=\int_\R \sgn(x-y) (M(y), e_1) \rd y$.  Hence, 
\begin{equation*} 
\avg{u_x}(x_j)=\sum_k \sgn(j-k) (M_k, e_1). 
\end{equation*} 
After symmetrizing, this gives 
\begin{equation*} 
\sum_j x_j\avg{u_x}(x_j)=\frac12 \sum_{j,k} \abs{x_j-x_k} (M_k, e_1)(M_j, e_1).  
\end{equation*} 
Thus we have 
\begin{equation} \label{eq:I1} 
 I_1=-\frac12 \sum_{j, k}\abs{x_j-x_k} (M_k, e_1)(M_j, e_1). 
\end{equation} 
To compute $I_2$ we need to evaluate $(\avg{v}(x_j), M_j)$.  Recall that 
\begin{equation*} 
v(x)=-P_V \int_R \sgn(x-y)M(y) \rd y. 
\end{equation*} 
Since $P_V M=M-P_{V^\perp} M=M-(M,e_1)e_1$ we get 
\begin{equation*} 
\avg{v}(x_j)=-\sum_k \sgn(j-k)M_k+\sum_k \sgn(j-k) (M,e_1)e_1, 
\end{equation*} 
and, 
\begin{equation*} I_2=-\sum_{j,k} x_j \sgn(j-k) (M_k,M_j) (e_1, e_1)+ \sum_{j,k}x_j\sgn(j-k)(M_k,e_1)(M_j,e_1)
\end{equation*} 
After symmetrizing, we get
\begin{equation} \label{eq:I2}
I_2=-\frac12 \sum_{j,k} \abs{x_j-x_k}  (M_k,M_j) (e_1, e_1)+ \frac12\sum_{j,k}\abs{x_j-x_k} (M_k,e_1)(M_j,e_1)
\end{equation} 
Combining \eqref{eq:I1} and \eqref{eq:I2} 
we get 
\begin{equation} 
 I_1+I_2=-\frac12 \sum_{j,k} \abs{x_j-x_k} (M_k, M_j).  
\end{equation} 
For comparison, let us compute $\tr{\Phi_x^{(2)}}$ in the right asymptotic region assuming the mass is discrete.  
From \eqref{eq:Phi2} we get  
\begin{equation*} 
\Phi_{x}^{(2)}=\sum_{j<k} (x_k-x_j)M_kM_j, 
\end{equation*}
and 
\begin{align*} 
\tr{\Phi_{x}^{(2)}}=\sum_{j<k} (x_k-x_j)\tr{(M_kM_j)}= \sum_{j<k} (x_k-x_j) (M_k, M_j)\\=\frac12(\sum_{j<k} (x_k-x_j)(M_k, M_j)+\sum_{j>k}
(x_j-x_k)(M_j,M_k)).  
\end{align*} 
Hence, 
\begin{equation} 
\tr{\Phi_{x}^{(2)}}=\frac12 \sum_{j,k} \abs{x_k-x_j}(M_k,M_j)
\end{equation} 
which implies the final claim.  
\end{proof} 
This result, together with  \autoref{cor:trace evolution}, allows one to integrate \eqref{eq:dotxN}.  Let us set $(\mathcal{M}_1,e_1)=c_1\cos (\theta t)+c_2\sin (\theta t)$.  
Then the solution $x_N$ reads as follows: 
\begin{equation} \label{eq:xN}
x_N(t) \! 
=c_{3}{\mathrm e}^{-(\mathcal{M}, e_1) t}  +\frac{(c_1 \left(\mathcal{M},e_1) -c_2 \theta \right)\cos \! \left(\theta  t \right) +(c_1\theta  +c_2 (\mathcal{M}, e_1))\sin \! \left(\theta  t \right)}{(\mathcal{M}, e_1)^{2}+\theta^{2}}.  
\end{equation}

This completes the solution of the forward and inverse problems for the generalized Neumann beam in the Hunter--Saxton regime.
\subsection{The case of two peakons with two components} \label{sec:HS2/peakons}

Here we can specialize the results above to the case $N=2$, to understand the interaction of two peakons, and further restrict to the two-component Hunter-Saxton case $d=2$, which we refer to as HS2, given by \eqref{hs2a}. 
For any $d$, 
in the special case of $N=2$, 
the wave $u(x,t)$ defined by \eqref{eq:uintf} takes the form
\begin{equation}\label{u2peakon}
u(x,t) = m_{11}(t)\,|x - x_1(t)| + m_{21}(t)\,|x - x_2(t)|,
\end{equation}
where the coefficient functions are defined by 
$m_{j\mu}(t)=\big(M_j(t), e_\mu\big)$ for $ j=1,\cdots, N, \, \mu=1,\cdots, d$. 

In the case of HS2, with two components ($d=2$), 
the interaction of two peakons is illustrated in Fig.\ref{fig: HS2_figures} below. 
\begin{figure}[htbp]
\centering
\subfloat[Positions of peaks for two HS2 peakons.  The initial conditions are: $x_1(0)=0, x_2(0)=1, m_{11}(0) =1, m_{12}(0) =0.1, m_{21}(0) =-1.5, m_{22}(0) =-10.$]{\includegraphics[width=0.55\textwidth]{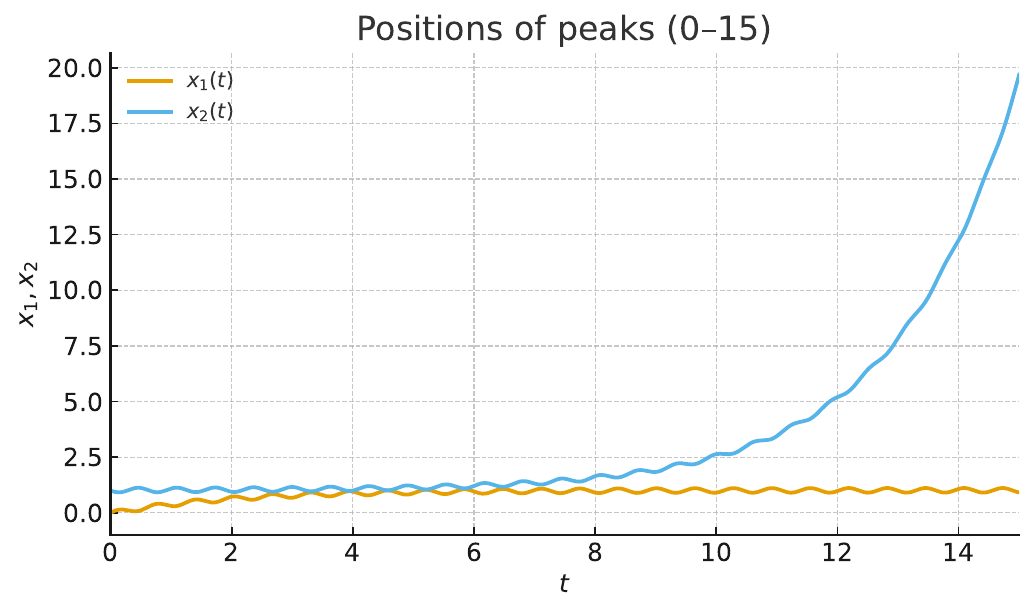}}
\hfill
\subfloat[Amplitudes of peaks for two peakons; $0\leq t\leq 15.$]{\includegraphics[width=0.55\textwidth]{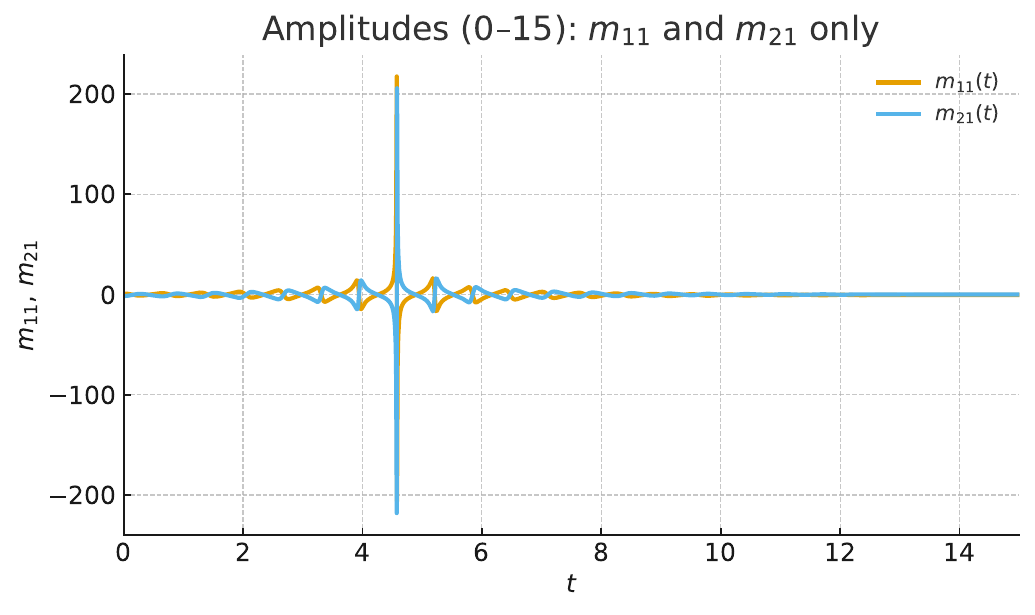}}
\subfloat[Amplitudes of peaks for two peakons; longer times: $15\leq t\leq 30.$]
{\includegraphics[width=0.55\textwidth]{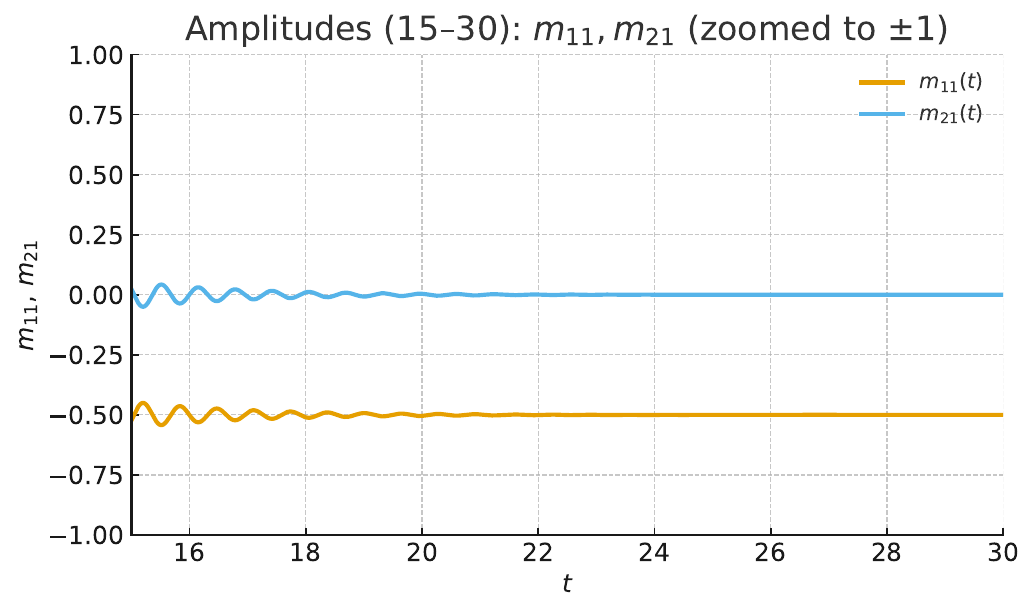}}

\caption{HS2 peakons satisfying \eqref{hs2b}.}
\label{fig: HS2_figures}

\end{figure}

\begin{example} \label{ex:sysHS2}
Let us set $(M_j, e_\mu)=m_{j\mu}$ for $ j=1,\cdots, N, \, \mu=1,\cdots, d$.  
When $N=2, d=2$, the equations of motion \eqref{eq:dotxj} and \eqref{eq:dotMj} read: 
\begin{align} 
&\dot x_1=-m_{21}|x_1-x_2|, \\ 
&\dot x_2=-m_{11}|x_1-x_2|, \\ 
&\dot m_{11}=m_{11}^2-m_{12}^2 -C_1m_{11}+C_2m_{12}, \\ 
&\dot m_{12}=2m_{11}m_{12}-C_1m_{12}-C_2m_{11}, 
\end{align} 
where $m_{11}+m_{21}=C_1$ and $\quad m_{12}+m_{22}=C_2$ are constants of motion.  
The corresponding field $u(x,t)$, satisfying one half of the system \eqref{hs2b}, is given by the formula \eqref{u2peakon}, where the amplitudes  $m_{11}$ and $m_{21}=C_1-m_{11}$ and the peak positions $x_1,x_2$ are obtained from the solution of this system of ODEs. 
The plots in Fig.\ref{fig: HS2_figures} are produced from a particular solution of the latter system, with 
$(\mathcal{M},e_1)=m_{11}+m_{21}=C_1=-0.5$. We can make the following observations: 
\begin{itemize}
  \item \textbf{Wave centered at \( x_1(t) \):}
  \begin{itemize}
    \item The center of the wave  \( x_1(t) \) becomes \emph{asymptotically periodic} as \( t \to \infty \).
    \item  A longer time simulation shows that amplitude $m_{11}$ stabilizes around $-0.5$.  
    
  \end{itemize}

  \item \textbf{Wave centered at \( x_2(t) \):}
  \begin{itemize}
    \item The center of the wave \( x_2(t) \) increases  rapidly while oscillating and tends to \( +\infty \) as \( t \to \infty \).  Since  $(\mathcal{M},e_1)<0$, this behaviour is consistent with \eqref{eq:xN}.  
    \item A longer time simulation shows that $m_{21}$ stabilizes around $0$; the wave component \( m_{21}(t)\,|x - x_2(t)| \) becomes increasingly flat and moves away from the region of interest.
    \item As a result, this component becomes \emph{asymptotically negligible} in the spatial region near \( x_1(t) \).
  \end{itemize}
\end{itemize}
\end{example} 



\begin{remark}
Constantin and Ivanov studied the short-pulse limit of the 2CH system in reference \cite{CI}, noting that although 2CH 
has no peakon solutions, its short-pulse limit \eqref{hs2b}, a two-component 
generalization of the Hunter-Saxton equation, does.  Our analysis differs in two crucial ways.  We present a general approach to the $d$-component generalized Hunter-Saxton equation, rather than the special case $d=2$.  
Even for $d=2$, our analysis considers different boundary conditions: the coupling field $v$ does not vanish at $\pm \infty$, and neither does $u$.  This is consistent with our interpretation of $v$ as an internal degree of freedom, 
akin to a spin variable. 
\end{remark} 
\section{Conclusions}

We have studied a Clifford-algebraic generalization of the Camassa--Holm (CH) hierarchy in Euclidean
signature, in which the scalar CH variable is coupled to internal degrees of freedom taking values in a
Clifford algebra with $d$ generators. This construction breaks the full $O(d)$ symmetry to the residual
$O(d\!-\!1)$ invariance associated with rotations in the subspace orthogonal to the distinguished direction
$e_1$, and can be interpreted as endowing the CH dynamics with ``spin''-type internal modes. For $d=2$, we
used symmetry methods to classify the resulting perturbations of CH within this framework, obtaining the
system analyzed here as well as an additional integrable candidate.

A key point of the paper is that the integrable structure persists in several complementary guises. On the
one hand, the Hunter--Saxton (short-pulse) reduction admits a measure-valued formulation and a Neumann-type
spectral problem whose Weyl function evolves by an explicit isospectral flow. This leads to a concrete
forward/inverse scheme for discrete measures and provides a Stieltjes-type continued fraction expansion for
the Weyl function in the Clifford setting. On the other hand, via a reciprocal transformation, we relate
our model to the Hirota--Satsuma system, and we include a travelling-wave analysis that clarifies how the
internal variables modify the classical CH wave families.

To illustrate the dynamical impact of the internal degrees of freedom, we carried out 
computations for $N$-peakon solutions in the Hunter--Saxton regime, with particular emphasis on the two-atom
case. In the example presented, the solution exhibits a striking asymptotic decomposition: as $t\to\infty$
the field $u(x,t)$ becomes effectively dominated, on any fixed observable spatial window, by a single
spatially localized oscillatory structure centered near $x_1(t)$, while the second component is transported
to $+\infty$ and simultaneously flattens so that its contribution becomes negligible near $x_1(t)$. The
presence of persistent harmonic modes in the amplitudes is a genuinely new feature compared to the scalar
Hunter--Saxton dynamics and appears to be a robust signature of the Clifford coupling.

Several directions emerge naturally from this work:
\begin{enumerate}
  \item \textbf{Physical interpretation.} Identify a compelling physical setting---for instance, in
  fluid models with internal/charged degrees of freedom---in which the Clifford-valued generalization and
  the residual $O(d\!-\!1)$ symmetry arise from first principles.
  \item \textbf{Signature dependence.} Determine how changing the signature of the underlying bilinear
  form affects qualitative behaviour (e.g., boundedness versus growth, oscillatory versus hyperbolic modes),
  and how this is reflected in the spectral data and time evolution.
  \item \textbf{Higher-rank classification.} Extend the $d=2$ symmetry classification to general $d$ and
  clarify which members of the resulting families are integrable, including the role of reciprocal
  transformations and Hamiltonian structures.
  \item \textbf{Peakon and spectral asymptotics.} Develop a systematic description of long-time behaviour
  for multi-peakon solutions (scattering, clustering, and synchronization mechanisms), and relate these to
  the invariants encoded by the Clifford Weyl function and its continued fraction data.
\end{enumerate}
We expect that progress on these problems will sharpen the connection between the spectral theory of
generalized beam/string problems and the nonlinear dynamics of CH-type equations with internal
degrees of freedom. There is a wealth of other coupled CH-type models whose solutions exhibit interesting behaviours,  such as ``waltzing'' peakons \cite{CHIP,ZQM2020}, and we expect that pursuing an approach via spectral theory will shed further light on these as well. 

\vspace{.1in}
\noindent {\bf Acknowledgements: } ANWH and VSN would like to acknowledge the hospitality we received in the Department of Mathematics, University of Saskatchewan, where JS hosted both of us during visits in May/June 2024 and April 2025. Jacek Szmigielski's research is supported by the Natural Sciences and Engineering Research Council of Canada (NSERC).

\end{document}